\newcommand{\R}{\mathbb{R}}
\newcommand{\Z}{\mathbb{Z}}
\newcommand{\1}{\mathbf{1}}
\DeclareMathOperator{\diag}{diag}
\DeclareMathOperator{\rank}{rank}
\DeclareMathOperator{\spa}{span}
\DeclareMathOperator{\tr}{tr}
\DeclareMathOperator{\range}{range}
\newcommand{\bb}{\mathbf{b}}
\newcommand{\ub}{\mathbf{u}}
\newcommand{\vb}{\mathbf{v}}
\newcommand{\xb}{\mathbf{x}}
\newcommand{\yb}{\mathbf{y}}
\newcommand{\zb}{\mathbf{z}}
\newcommand{\Hb}{\mathbf{H}}
\newcommand{\Ib}{\mathbf{I}}
\newcommand{\Mb}{\mathbf{M}}
\newcommand{\omegab}{\bm{\omega}}
\newcommand{\Qb}{\mathbf{Q}}
\newcommand{\Fb}{\mathbf{F}}
\newcommand{\Eb}{\mathbf{E}}
\newcommand{\Ab}{\mathbf{A}}
\newcommand{\Bb}{\mathbf{B}}
\newcommand{\Cb}{\mathbf{C}}
\newcommand{\Tb}{\mathbf{T}}
\newcommand{\rb}{\mathbf{r}}
\newcommand{\Gb}{\mathbf{G}}
\newcommand{\mV}{\mathrm{V}}
\newcommand{\gammab}{{\bm \gamma}}
\newcommand{\betab}{\bm{\beta}}
\newcommand{\Sigmab}{\bm{\Sigma}}
\DeclareFontFamily{OT1}{pzc}{}
\DeclareFontShape{OT1}{pzc}{m}{it}{<-> s * [1.200] pzcmi7t}{}
\DeclareMathAlphabet{\mathpzc}{OT1}{pzc}{m}{it}
\newcommand*\mcap{\mathbin{\mathpalette\mcapinn\relax}}
\newcommand*\mcapinn[2]{\vcenter{\hbox{$\mathsurround=0pt
			\ifx\displaystyle#1\textstyle\else#1\fi\bigcap$}}}
\newcommand*\mcup{\mathbin{\mathpalette\mcupinn\relax}}
\newcommand*\mcupinn[2]{\vcenter{\hbox{$\mathsurround=0pt
			\ifx\displaystyle#1\textstyle\else#1\fi\bigcup$}}}
\newcommand{\mR}{\mathrm{R}}
\newcommand{\dH}{\delta_{\rm H}}
\newcommand{\dz}{\delta_{\rm z}}
\newcommand{\mN}{\mathrm{N}}
\newcommand{\mM}{\mathscr{M}}
\DeclareMathOperator{\Lap}{Lap}
\newcommand{\Lapn}{\Lap^n}
\newcommand{\Lapm}{\Lap^m}
\newcommand{\xbs}{\xb^\sharp}
\newcommand{\xbf}{\xb^\flat}
\newcommand{\Exp}{\mathbb{E}}
\newcommand{\pdf}{\mathrm{pdf}}
\newcommand{\eq}{\mathcal{E}}
\newcommand{\Eq}{\mathscr{E}}
\newcommand{\iast}{i^\ast}
\newcommand{\Wb}{\mathbf{W}}
\newcommand{\ab}{\mathbf{a}}
\newcommand{\PP}{\mathpzc{P}}
\newcommand{\PPo}{\mathpzc{P}_{\Omega}}
\newcommand{\dA}{\delta_{\rm A}}
\newcommand{\db}{\delta_{\rm b}}
\newcommand{\Ub}{\mathbf{U}}
\newcommand{\Rb}{\mathbf{R}}
\newcommand{\Sb}{\mathbf{S}}
\newcommand{\Vb}{\mathbf{V}}
\newcommand{\sigm}{\sigma_{\rm m}}
\newcommand{\sigM}{\sigma_{\rm M}}
\newcommand{\tH}{\mathcal{Z}_H}
\newcommand{\tz}{\mathbf{z}_{\mathbf{H}}}
\newcommand{\mO}{\mathcal{O}}
\newcommand{\Yb}{\mathbf{Y}}
\newcommand{\oY}{\overline{\Yb}}
\newcommand{\uY}{\underline{\Yb}}
\newcommand{\lam}{\lambda_{\rm m}}
\newcommand{\laM}{\lambda_{\rm M}}
\newtheorem{proposition}{Proposition}
\newtheorem{theorem}{Theorem}
\newtheorem{lemma}{Lemma}
\newtheorem{definition}{Definition}
\title{\bf Dynamical Privacy in  Distributed Computing\\
	Part I:  Privacy Loss and PPSC Mechanism}
\date{}
\author{Yang Liu, Junfeng Wu, Ian Manchester, Guodong Shi\thanks{Y. Liu is with the Research School of Engineering, The Australian National University, Canberra 0200, Australia. (email: yang.liu@anu.edu.au).}%
\thanks{J. Wu is with   with the College of Control Science and Engineering, Zhejiang
University, Hangzhou 310027, China (e-mail: jfwu@zju.edu.cn)}%
\thanks{I. R. Manchester is with Australian Center for Field Robotics, The University of Sydney, NSW 2006, Australia. (email: ian.manchester@sydney.edu.au)}
\thanks{G. Shi is with Australian Center for Field Robotics, The University of Sydney, NSW 2006, Australia, and the Research School of Engineering, The Australian National University, Canberra 0200, Australia (email: guodong.shi@sydney.edu.au)}
 }
\begin{document}
\maketitle

\begin{abstract}
	A distributed computing protocol consists of three components: (i) Data Localization:  a network-wide dataset is decomposed into  local datasets separately preserved at a network of nodes; (ii) Node Communication:  the nodes hold individual dynamical states and communicate with the neighbors about these dynamical states; (iii) Local Computation: state recursions are computed at each individual node. Information about the local datasets enters the computation process through the node-to-node communication and  the local computations, which may be leaked to dynamics eavesdroppers having access to global or local node states.  In this paper, we systematically investigate this potential  computational privacy risks in distributed computing protocols in the form of structured system identification, and then propose and thoroughly analyze a Privacy-Preserving-Summation-Consistent (PPSC) mechanism as a generic privacy encryption subroutine for  consensus-based distributed computations. The central idea is  that the consensus manifold is where we can both hide node privacy and achieve  computational accuracy.  In this first part of the paper, we demonstrate the computational privacy risks in distributed algorithms against dynamics eavesdroppers and particularly in distributed linear equation solvers, and then propose the
	PPSC mechanism and illustrate its usefulness.
\end{abstract}

\section{Introduction}

The development of distributed control and optimization  has become one of the central  streams in the study of complex network operations,  due to the rapidly growing volume and dimension of data and information flow from a variety of applications  such as social networking, smart grid, and intelligent transportation \cite{magnusbook}. Celebrated results have been established for problems ranging from optimization  \cite{nedic09,nedic10} and learning  \cite{boyd2011distributed} to formation \cite{xiaoming-formation} and localization  \cite{khan2009distributed}, where distributed control and decision-making solutions provide resilience and scalability for large-scale problems through allocating information sensing and decision making over individual agents. The origin of this line of research can be traced back to the 1980s from the work of distributed optimization and decision making \cite{tsi} and parallel computations \cite{lynch}.
Consensus algorithms serve as a basic tool for the information dissemination of distributed algorithms \cite{jad03,xiao04}, where the goal is to
drive the node states of a network to a common value related to the network initials by node interactions respecting the network structure, usually just the average.

The primary power of consensus algorithms lies in  their distributed nature in the sense that throughout the recursions of the algorithms, individual nodes only share their state information with a few other nodes that are connected to  or trusted which are called neighbors.   As a result, building on consensus algorithms, many algorithms can be designed which utilize the   consensus mechanism  to achieve network-level control and computation objectives, e.g., the recent work on network linear equation solvers \cite{Mou-TAC-2015,Shi-TAC-2017}, and convex optimization of a sum of  local objective functions \cite{nedic10}. This emerging progress on distributed computation is certainly an extension of the classical development of distributed and parallel computation, but the new emphasis  is on how a distributed computing algorithm for a network-wide problem can be designed with robustness and resilience over a given communication structure \cite{Mou-TAC-2015,Shi-TAC-2017}, instead of breaking down a high-dimensional system to a number of computing units with all-to-all communications.

Under a distributed computing structure, the
information about the network-wide computation problem is encoded in the individual node initial values or update rules, which are not shared directly among the nodes. Rather, nodes share certain dynamical states based on neighboring communications and local problem datasets, in order to find solutions of the global problem. The node dynamical states may contain sensitive private information  directly or indirectly, but  in distributed computations to achieve the network-wide computation goals nodes   have to accept the fact that their privacy, or at least some of it, will be inevitably lost. It was pointed out that indeed nodes  lose their  privacy in terms of initial values if an attacker, a malicious user, or an eavesdropper knows part of the node trajectories and the network structure \cite{sundaram2007,yuan2013}. In fact, as long as observability holds, the whole network initial values become re-constructable with only a segment of node state updates at a few selected nodes.  Several insightful privacy-preserving consensus algorithms have been presented in the literature \cite{huang2012differentially,manitara2013privacy,mo2017privacy,nicolas2017,claudio2018} in the past few years, where the central idea is to inject  random noise or offsets in the node communication and iterations, or employ local maps as node state masks. The classical notion on differential privacy in computer science has also been introduced to  distributed optimization problems   \cite{huang2015differentially,pappas2017}, following the earlier work on differentially private filtering \cite{pappas2014}.

In this paper, we systematically investigate the potential  computational privacy risks in distributed computing protocols, and then propose and thoroughly analyze a Privacy-Preserving-Summation-Consistent (PPSC) mechanism as a generic privacy encryption step for consensus-based distributed computations.   In  distributed computing protocols, a network-wide dataset is decomposed into  local datasets which are separately  located at a network of nodes; the nodes hold individual dynamical states, based on which communication packets are produced and shared  according to the network structure; local computations are then carried out at each individual node. The information about the local datasets enters the computation process through the node-to-node communication or the local computations; while techniques from system identification may be able to recover such datasets by observations of the node states.  In this first part of the paper, we reveal the identification-type computational risk in distributed algorithms and particularly in distributed linear equation solvers, and then define a so-called
PPSC mechanism for privacy preservation. The main results of the paper are summarized as follows.

\begin{itemize}
	\item We show that distributed computing is exposed to privacy risks in terms the local datasets from both global and local eavesdroppers having access to the entirety or part of the node states. Particularly, for  network linear equations as a basic computation task, we show explicitly how  the computational privacy has indeed been completely lost to global eavesdroppers for almost all initial values, and partially to local passive or active eavesdroppers  who can monitor or alter selected node states.
	\item We propose a Privacy-Preserving-Summation-Consistent (PPSC) mechanism which is shown to be useful as  a generic privacy encryption step for  consensus-based distributed computations. The central idea is to make use of the consensus manifold to on one hand hide node privacy, and on the other hand ensure computation accuracy and efficiency.
\end{itemize}
We remark that the existence of lower-dimensional manifold for the network-wide computation  goal is very much  not  unique only for consensus-based distributed computing, and therefore the idea of the PPSC mechanism may be extended to other distributed computing methods. In the second part of the paper, we demonstrate that   PPSC mechanism can be realized by conventional gossip algorithms, and establish their detailed privacy-preserving capabilities  and convergence efficiencies. 

Some preliminary ideas and results of this work were presented at
IEEE Conference on Decision and Control in Dec. 2018   \cite{cdc1,cdc2}.  In the current manuscript, we have established a series of   new results on privacy loss characterizations and privacy preservation quantifications, in addition to a few new illustration  examples and technical proofs.

The remainder the paper is organized as follows. In Section \ref{sec:problem}, we introduce a general model for distributed computing protocols, based on which we define the resulting computational privacy. The connection between computational privacy and network structure and initial-value privacies is also discussed.  In Section \ref{Sec:privacy}, we investigate the potential loss of computational privacies in distributed computing algorithms, and in particular in distributed linear equation solvers. We also show that conventional methods for differential privacy often lead to non-ideal computation result for distributed computings facing fundamental  privacy vs. accuracy trade-off in differential privacy can be challenging for a recursive computing process. In Section \ref{Sec:PPSC}, we define the PPSC mechanism and illustrate how we can design privacy-preserving algorithms for average consensus, distributed linear equation solvers, and in general for distributed convex optimization based on PPSC mechanism. Section \ref{sec:local} extends the discussion on computational privacy loss to local eavesdroppers and establish some results using classical system identification methods. Finally Section \ref{sec:conc} concludes the paper with a few remarks. The various statements established in this paper is collected in the appendix.

\medskip

\noindent{\bf Notation.} We let $\Hb_{ij},\Hb_i$ denote the $ij$--th entry and the transpose of the $i$--th row of a matrix $\Hb$, respectively. For a vector $\vb$, $[\vb]_i$ denotes its $i$--th component. Let $\diag(a_1,\dots,a_n)$ with $a_i\in\R$ denote a diagonal matrix with diagonal entries being $a_1,\dots,a_n$ in order.
The transpose of a matrix or a vector $\Hb$ are denoted by $\Hb^\top$. Let $\tr(\cdot)$ denote the trace of a square matrix. The orthogonal complement of a vector space $\mathrm{S}$ is denoted by $\mathrm{S}^\perp$. Let $\otimes$ denote the Kronecker product of two matrices or the direct sum of two sets. Conventionally, $\|\cdot\|_p$ is the $p$--norm of a vector or a matrix, and $\|\cdot\|$ is its $2$--norm.
Let $\lam(\cdot),\laM(\cdot)$ be the smallest and the largest eigenvalue of a real symmetric matrix. Let $\sigM(\cdot),\sigm(\cdot)$ denote the maximum and minimum absolute value of eigenvalues of a real symmetric matrix.
Let $\|\cdot\|_{\rm F}$ denote the Frobenius norm of a matrix. The range of a matrix or a function is denoted as $\range(\cdot)$. We use $\Z,\Z^{\ge0},\Z^+,\R,\R^+$ to represent the set of integers, the set of nonnegative integers, the set of positive integers, the set of real numbers, and the set of positive real numbers, respectively. We let $\Pr(\cdot)$ be the probability of an event in some probability space. We let $\pdf(\cdot),\Exp(\cdot)$ denote the probability density function (PDF) and the expected value of a random variable, respectively. Let $\Lap(v)$ with $v>0$ denote the zero--mean Laplace distribution with variance $2v^2$. Correspondingly, $\Lapn(v)$ represents an $n$--dimensional vector of identical and independent random variables distributed according to $\Lap(v)$.

\section{Privacy Notions in Distributed Computation}\label{sec:problem}
In this section, we first introduce a basic  model for distributed computation protocols which clearly  indicates the problem  decomposition, node communication, and node state update along the recursions. Next, we discuss potential eavesdroppers for   distributed computing
protocols and the resulting categories of privacy losses at both network and individual node levels.

\subsection{Distributed Computations}
Consider a network of $n$ nodes indexed in $\mathrm{V}=\{1,\dots,n\}$ whose communication structure is described by a graph $\mathrm{G}=(\mathrm{V},\mathrm{E})$. We assume $\mathrm{G}$ is an undirected connected graph for the sake of having  a simplified presentation.  Let $\mathpzc{D}$ be a dataset whose elements are embedded in different dimensional Euclidean spaces. We write $\mathpzc{D}=\mathpzc{D}^\prime$ if $\mathpzc{D}$ and $\mathpzc{D}^\prime$ have identical records (up to some equivalence relation). Suppose $\mathpzc{D}$ is allocated over the network by partition $\mathpzc{D}= \mathpzc{D}_1 \otimes \cdots \otimes \mathpzc{D}_n$, where node $i$ holds   $\mathpzc{D}_i$ and $\otimes$ represents direct sum  to indicate potential logical or computational  structure of the data points $\mathpzc{D}$. Each node holds a dynamical sate $\mathbf{x}_i(t) \in \mathbb{R}^m$.  Let $\mathrm{N}_i=\big\{j:\{i,j\}\in \mathrm{E} \big\}$ be the neighbor set of node $i$. The following protocol describes a general procedure for distributed computing algorithms.

\begin{algorithm}[htb]{{\bf DCP} Distributed Computing Protocols}
	\begin{algorithmic}[1]
		\STATE Set $t\gets0$ and $\mathbf{x}_i(0)\in \mathbb{R}^m$ for $i\in\mathrm{V}$.
		\STATE Each node $i$ computes $\mathbf{c}_i(t)=\mathpzc{l}_{i,t}\big(\mathbf{x}_i(t),\mathpzc{D}_i\big)$ and sends $\mathbf{c}_i(t)$ to all its neighbors $j\in\mN_i$.
		\STATE Each node $i $ updates their states  $\xb_i(t+1) \gets \mathpzc{f}_{i,t}\Big( \xb_i(t);\  \mathbf{c}_j(t),j\in\mN_i;\ \mathpzc{D}_i\Big) $.
		\STATE Set $t\gets t+1$ and go to Step 2.
	\end{algorithmic}
\end{algorithm}

The $\mathbf{c}_i(t)$ are the intermediate quantities for communications among the nodes, which often coincide with the nodes states $\mathbf{x}_i(t)$. The mappings $\mathpzc{l}_{i,t}$ and $\mathpzc{f}_{i,t}$ can be deterministic or random, and they are distributed in the sense that they are mappings of the local data $\mathpzc{D}_i$ and information received from neighbors.  The closed-loop of the above distributed computing protocol is described by
\begin{align}\label{eq:dcp}
\xb_i(t+1) = \mathpzc{f}_{i,t}\Big( \xb_i(t);\  \mathpzc{l}_{j,t}\big(\mathbf{x}_j(t),\mathpzc{D}_j\big),j\in\mN_i;\ \mathpzc{D}_i\Big).
\end{align}
In the following, we present a few examples in the literature showing how this DCP model covers average consensus  and consensus-based distributed computing algorithms. To this end, let $\mathbf{W}=[w_{ij}]$ be a doubly stochastic matrix in $\mathbb{R}^{n\times n}$ that complies with the graph $\mathrm{G}$, i.e., $w_{ij}>0$ if and only if $j\in \mathrm{N}_i$ over the graph $\mathrm{G}$ for $i\neq j$. In particular, we assume $w_{ii}>0$ for all $i\in\mathrm{N}_i$.

\subsubsection{Average Consensus}Each node $i$ is assigned $\mathpzc{D}_i=\beta_i \in \mathbb{R}$. The initial state $\mathbf{x}_i(0)$ is set as $\beta_i$. The mapping $\mathpzc{l}_{i,t}$ is identity, and therefore $\mathbf{c}_i(t)=\mathbf{x}_i(t)$. Then
\begin{equation}
\begin{aligned}\label{eq:ave}
\xb_i(t+1)=\mathpzc{f}_{i,t}\Big( \xb_i(t);\  \mathbf{c}_j(t),j\in\mN_i;\ \mathpzc{D}_i\Big)=
\sum_{j\in \mathrm{N}_i \mcup \{i\}} w_{ij}\xb_j(t)
\end{aligned}
\end{equation}
represents a standard average consensus algorithm.

\subsubsection{Distributed Linear Equations} Each node $i$ holds $\mathpzc{D}_i=(\mathbf{H}_i,\zb_i)$ with $\Hb_i \in\mathbb{R}^m$ and $\zb_i\in \mathbb{R}$ defining a linear equation
\begin{align}
\mathcal{E}_i: \quad \Hb_i^\top \yb= \zb_i
\end{align}with respect to an unknown $\yb\in\R^m$. The overall dataset $\mathpzc{D}$ forms the following linear algebraic equation
\begin{equation}\label{eq:LAE}
\mathcal{E}: \quad \Hb\yb=\zb
\end{equation}
with  $\Hb\in\R^{n\times m},\ \zb\in\R^n$, where $\Hb_i^\top$ is the $i$-th row of $\Hb$ and $\zb_i$ denotes the $i$-th component of $\zb$. Let $\PP_i:\R^m\to\R^m$ be the orthogonal projection onto the subspace $\big\{\yb\in\R^m:\Hb_i^\top\yb=\zb_i\big\}$, which encodes the information of $\mathpzc{D}_i$. The following two algorithms are common distributed linear equation solvers.
\begin{itemize}
	\item {[Consensus+Projection Algorithm (CPA)]} Let $\mathbf{c}_i(t)=\mathbf{x}_i(t)$. The
	following algorithm is an Euler approximation of the so--called ``consensus + projection" flow \cite{Shi-TAC-2017,liu2013asynchronous,Mou-TAC-2015}:
	\begin{equation}\label{eq:shi2017}
	\begin{aligned}
		\xb_i(t+1)=\mathpzc{f}_{i,t}\Big( \xb_i(t);\  \mathbf{c}_j(t),j\in\mN_i;\ \mathpzc{D}_i\Big) =
		\sum\limits_{j\in\mN_i\mcup\{i\}}w_{ij}\xb_j(t)+\alpha\Big(\PP_i\big(\xb_i(t)\big)-\xb_i(t)\Big),
	\end{aligned}
	\end{equation}
	where $\alpha>0$ is a step-size parameter.
	
	\item {[Projection Consensus Algorithm (PCA)]} Let $\mathbf{c}_i(t)=\PP_i\big(\mathbf{x}_i(t)\big)$. We can also have the following algorithm as a distributed linear equation solver \cite{nedic10,anderson2015decentralized}:
	\begin{equation}\label{eq:PCA}
	\begin{aligned}
	&\quad\xb_i(t+1)=\mathpzc{f}_{i,t}\Big( \xb_i(t);\  \mathbf{c}_j(t),j\in\mN_i;\ \mathpzc{D}_i\Big)= \sum\limits_{j\in\mN_i\mcup\{i\}}w_{ij}\PP_j\big(\mathbf{x}_j(t)\big).
	\end{aligned}
	\end{equation}
\end{itemize}

\subsubsection{Distributed Convex Optimization}
Each node holds $\mathpzc{D}_i$ being    a differentiable convex function $f_i:\R^m\to\R$. The overall dataset $\mathpzc{D}= \sum\limits_{i=1}^n f_i(\cdot)$. 
Again $\mathbf{c}_i(t)=\mathbf{x}_i(t)$, and the node update is given by \cite{nedic09}
\begin{equation}
\begin{aligned}\label{eq:gradient}
\xb_i(t+1)=\mathpzc{f}_{i,t}\Big( \xb_i(t);\  \mathbf{c}_j(t),j\in\mN_i;\ \mathpzc{D}_i\Big)=\sum_{j\in \mathrm{N}_i\mcup\{i\}} w_{ij}\xb_j(t) -\epsilon_t \nabla f_i(\mathbf{x}_i(t)),
\end{aligned}
\end{equation}
which describes a distributed gradient descent algorithm.

\subsection{Communication and Dynamics  Eavesdroppers}
In practice, there  can be two typical  types of malicious eavesdroppers for a distributed computing process. Communication eavesdroppers are adversaries who  can intercept node-to-node communications, i.e., the $\mathbf{c}_i(t)$. Dynamics eavesdroppers are adversaries who can monitor the node dynamical states  $\mathbf{x}_i(t)$. For the choice of the communication mechanism
$$
\mathbf{c}_i(t)=\mathpzc{l}_{i,t}\big(\mathbf{x}_i(t),\mathpzc{D}_i\big),
$$
we can see that eavesdroppers that have access to both the node states and node communications may easily identify $\mathpzc{D}_i$ 
to the extent that $\mathbf{c}_i(t)$ depends on the data $\mathpzc{D}_i$. 
For the two categories of eavesdroppers, they can also have distinct capabilities.
\begin{itemize}
	\item[(i)] Eavesdroppers may have different knowledge about the network, e.g., knowledge of the number of nodes, network structure, and mechanisms of the communication and node updates.
	\item[(ii)] Eavesdroppers can be local or global in the sense that they are able to monitor the communication or node states of the entire network or only at one or a few nodes.
	\item[(iii)] Eavesdroppers can be active or passive by whether they are able to or willing to influence the observed data. E.g., a passive communication  eavesdropper records $\mathbf{c}_i(t)$ at a node $i$ only, while an active eavesdropper might aim to replace $\mathbf{c}_i(t)$ with some injected data.
\end{itemize}
In this paper, we focus on dynamics eavesdroppers, motivated by  applications of distributed computing where the node dynamical states $\mathbf{x}_i(t)$ might represent physical quantities, e.g. the powers and voltages of generators in a smart grid, the velocities and accelerations  of trucks in a platoon, and the workloads of computing servers in a cloud. Such physical quantities might be   potentially measurable by malicious sensors. 
We limit our discussion on the effect of  communication eavesdroppers because on the one hand, existing  cryptography methods have the capacity to enable secure node-to-node communication; and on the other hand, communication eavesdroppers fall to the category of dynamics eavesdroppers when $
\mathbf{c}_i(t)=\mathbf{x}_i(t)$, which holds true for many existing distributed computing algorithms.

\subsection{Privacy Notions and Related Works}

There can be various notions of privacies in a dynamical network process. First of all, the node dynamics reveal information about the network itself. As a result, dynamics eavesdroppers may infer statistics or even the topology of the network. This type of study has been investigated in the context network tomography and network identification problems \cite{mesbahi2012,duffield2006,materassi2012,vdh2018}.
 Furthermore, for an individual node $i$,   its dynamical state $\mathbf{x}_i(t)$, and especially its initial value $\mathbf{x}_i(0)$ may directly contain sensitive  information. We present the following illustrative example.

\medskip

\noindent{\bf Example 1}. Let $i$ represent an individual in a company of in an organization. Let $\beta_i\in\mathbb{R}$ be the monthly salary of the individual $i$. For each month, the individuals can run the average consensus algorithm (\ref{eq:ave}) with  $\mathbf{x}_i(0)=\beta_i$, and as $t$ tends to infinity, each node will be able to learn the value of $\sum_{j=1}^{n}\beta_j/n$. Along this computation  process (\ref{eq:ave}), it has been shown that with observability condition and knowledge about the weight matrix $\mathbf{W}$, it is possible to recover  $\mathbf{x}(0)$ from observation of a finite trajectory at a single node $i$ \cite{yuan2013}. On the other hand, the network structure $\mathrm{G}$ may be re-constructible  by a node knockout process established in \cite{mesbahi2012}. Additionally, in this process, node $i$ reveal its states to its neighbors, which means that it loses its privacy with respect to $\beta_i$ to neighbors right at the first step.

\medskip

In this paper, we are primarily interested in  the privacy issues defined by the local datasets $\mathpzc{D}_i$ against dynamics eavesdroppers. These local datasets $\mathpzc{D}_i$ may be sensitive in terms of privacy in general dynamical networks, e.g., the local cost functions in distributed economical MPC directly reflects the economy of individual subsystems \cite{ruigang2017}.  We aim to establish concrete results on the possibilities of identifying the local datasets $\mathpzc{D}_i$ in distributed computing schemes. This problem is precisely  a   parameter identification problem in classical system identification literature \cite{glover1976,ljung1994}, but with specific dynamics structure inherited from the network topology and problem setup. We term this type of privacy as computational privacy in a distributed computing protocol, and use  distributed   solvers  for network linear equations as a basic  example to illustrate the computational privacy loss.

\section{Dynamical Privacy and PPSC Mechanism}\label{Sec:privacy}

In this section, we define and specify computational risks in a distributed computing protocol, subject to either global or local eavesdroppers of node dynamical states.

\subsection{Dynamics Eavesdroppers}

We introduce the following definition.
\begin{definition}\label{def:global_eavesdroppers}
	For a distributed computing protocol (\ref{eq:dcp}), a global dynamics eavesdropper is a party who has access to the node states $\big(\xb(t)\big)_{t=0}^\infty$ with initial condition $\xb(0)= \xb_0$. For global dynamics eavesdroppers, the distributed computing protocol (\ref{eq:dcp}) is    computationally  private with respect to the dataset $\mathpzc{D}$ under initial condition $\xb_0\in\R^{nm}$ if the dataset $\mathpzc{D}$ is not recoverable from $\big(\xb(t)\big)_{t=0}^\infty$.
\end{definition}

A distributed computing protocol (\ref{eq:dcp}) induces a mechanism   $\mathscr{M}$ which maps from the space of the data $\mathpzc{D}$ to the space of the trajectories $\big(\xb(t)\big)_{t=0}^\infty$,   by $$
\mathscr{M}_{\xb_0}(\mathpzc{D})=\big(\xb(t)\big)_{t=0}^\infty.
$$
The following statement provides a generic condition for computational privacy.

\begin{proposition}\label{prop1}
	For the distributed computing protocol (\ref{eq:dcp}), the following statements hold.
	
	(i) With deterministic update, the distributed computing protocol (\ref{eq:dcp}) is globally  computationally  private with respect to the dataset $\mathpzc{D}$ under initial condition $\xb_0\in\R^{nm}$ if and only if there exists $\mathpzc{D}^\prime\neq\mathpzc{D}$ such that $\mathscr{M}_{\xb_0}(\mathpzc{D})=\mathscr{M}_{\xb_0}(\mathpzc{D}^\prime)$.
	
	(ii) With randomized update, the distributed computing protocol (\ref{eq:dcp}) is globally  computationally  private under  initial condition $\xb_0\in\R^{nm}$ with respect to the dataset $\mathpzc{D}$, if and only if there exists $\mathpzc{D}^\prime\neq\mathpzc{D}$ such that $\pdf\big(\mathscr{M}_{\xb_0}(\mathpzc{D})\big) = \pdf\big(\mathscr{M}_{\xb_0}(\mathpzc{D}^\prime)\big)$.
\end{proposition}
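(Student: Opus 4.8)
The plan is to recognize that the proposition is, at its core, a restatement of the elementary fact that a map fails to be injective at a point precisely when some other point shares its image; the real substance lies in formalizing the word ``recoverable'' in Definition~\ref{def:global_eavesdroppers} so that this reduction is legitimate in both the deterministic and the randomized regime. I would first fix the operational reading of recoverability: the dataset $\mathpzc{D}$ is recoverable from $\big(\xb(t)\big)_{t=0}^\infty$ exactly when there is a decoding rule returning $\mathpzc{D}$ from the observed output, which is possible if and only if that output is associated with $\mathpzc{D}$ and with no other admissible dataset. In the deterministic case the output is the trajectory itself and $\mathscr{M}_{\xb_0}$ is an ordinary map; in the randomized case a realized trajectory carries only distributional information about $\mathpzc{D}$, so the object whose injectivity matters is the induced law, i.e. the map $\mathpzc{D}\mapsto\pdf\big(\mathscr{M}_{\xb_0}(\mathpzc{D})\big)$.

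For part (i) I would argue both directions against injectivity. If there is $\mathpzc{D}^\prime\neq\mathpzc{D}$ with $\mathscr{M}_{\xb_0}(\mathpzc{D})=\mathscr{M}_{\xb_0}(\mathpzc{D}^\prime)$, then the single observed trajectory is consistent with two distinct datasets, so no decoder can single out $\mathpzc{D}$ and privacy holds. Conversely, if no such collision exists, then $\mathscr{M}_{\xb_0}$ attains the value $\mathscr{M}_{\xb_0}(\mathpzc{D})$ only at $\mathpzc{D}$, so the decoder sending each attainable trajectory to its unique preimage recovers $\mathpzc{D}$ and privacy fails. This delivers the stated equivalence. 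For part (ii) the argument is structurally identical but carried out at the level of laws: since the updates are random, the eavesdropper's information about $\mathpzc{D}$ is exhausted by the distribution of $\mathscr{M}_{\xb_0}(\mathpzc{D})$, so two datasets inducing an identical PDF are statistically indistinguishable. Hence if $\pdf\big(\mathscr{M}_{\xb_0}(\mathpzc{D})\big)=\pdf\big(\mathscr{M}_{\xb_0}(\mathpzc{D}^\prime)\big)$ for some $\mathpzc{D}^\prime\neq\mathpzc{D}$, privacy holds; and if the PDF at $\mathpzc{D}$ is matched by that of no other dataset, the law determines $\mathpzc{D}$ uniquely and $\mathpzc{D}$ is recoverable, so privacy fails.

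The step I expect to be the main obstacle, and indeed the only part that goes beyond definition-chasing, is justifying in (ii) that equality of PDFs is exactly the right boundary between recoverable and private. I would need to show that (a) identical output distributions genuinely preclude any recovery procedure, so that matching PDFs is \emph{sufficient} for privacy, and (b) distinct output distributions permit identification of $\mathpzc{D}$ from the observable statistics, so that a PDF unmatched by any competitor is \emph{sufficient} for recovery. Part (a) is clean: any measurable decoder induces the same pushforward law under $\mathpzc{D}$ and $\mathpzc{D}^\prime$, hence cannot separate them. Part (b) is where care is required, because a single realization need not reveal the law; I would therefore fix the standing convention that the eavesdropper has access to the mechanism's output distribution (equivalently, to enough information to identify that law), which is the standard assumption in this privacy setting and under which injectivity of $\mathpzc{D}\mapsto\pdf\big(\mathscr{M}_{\xb_0}(\mathpzc{D})\big)$ is precisely recoverability.
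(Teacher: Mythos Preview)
Your proposal is correct and follows essentially the same approach as the paper: both reduce the proposition to checking injectivity of the relevant map (the trajectory map $\mathscr{M}_{\xb_0}$ in the deterministic case, and the induced map $\mathpzc{D}\mapsto\pdf\big(\mathscr{M}_{\xb_0}(\mathpzc{D})\big)$ in the randomized case). The paper's proof of (ii) is in fact terser than yours---it simply asserts that the PDF fully characterizes the statistical properties of the random trajectory, defines the map $\mathpzc{q}:\mathpzc{D}\mapsto\pdf\big(\mathscr{M}_{\xb_0}(\mathpzc{D})\big)$, and declares the rest ``analogous to (i)''---whereas you explicitly flag and address the subtlety that a single realization need not determine the law.
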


In the following, we show some more explicit characterizations of the computational privacy in network linear equation solvers. Let $\Eq=\{\ab^\top\yb=b:\ab\in\R^m,b\in\R\}$ be the space of  linear equations over $\R^m$, and then let $
\Eq^\ast_{[nm]}=\big\{\Ab\yb=\bb:\Ab\in\R^{n\times m},\bb\in\R^n,\ \bb\in\range(\Ab)\big\}$ be the space of the solvable $n$--dimensional linear equations over $\R^m$.

\begin{definition}\label{def:linear_equation_equivalence}
	(i) Let $\eq:\ab^\top\yb=b$ and $\eq^\prime:\ab^{\prime\top}\yb=b^\prime$ be two equations in $\Eq$. We term that $\eq$ and $\eq^\prime$ are equivalent if there exists $\alpha\in\R\setminus\{0\}$ such that $ \ab^\prime = \alpha \ab,\ b^\prime = \alpha b$.
	
	(ii) Let $\eq:\Ab\yb=\bb$ and $\eq^\prime:\Ab^\prime\yb=\bb^\prime$ be two equations in $\Eq^\ast_{[nm]}$. We say that $\eq$ and $\eq^\prime$ are equivalent if there exist $\alpha_1,\dots,\alpha_n\in\R\setminus\{0\}$ such that
	$\Ab^\prime=\diag(\alpha_1,\dots,\alpha_n)\Ab,\ \bb^\prime=\diag(\alpha_1,\dots,\alpha_n)\bb.$
\end{definition}
It is worth mentioning that equipped with the equivalence relation in Definition \ref{def:linear_equation_equivalence}, each equation $\eq\in\Eq^\ast_{[nm]}$ corresponds to a unique equivalence class: $\{\eq^\prime\in\Eq^\ast_{[nm]}:\eq^\prime\sim\eq\}$. This further specifies the quotient space $\Eq^\ast_{[nm]}/\sim:=\big\{\{\eq^\prime\in\Eq^\ast_{[nm]}:\eq^\prime\sim\eq\}:\eq\in\Eq^\ast_{[nm]}\big\}$. It can be easily verified that the linear equations in the same equivalence class share a common solution set. Therefore, with Definition \ref{def:linear_equation_equivalence}, we have effectively identified the information of a solvable linear equation with its solution set for equations in $\Eq^\ast_{[nm]}$. For the two distributed linear equation solvers in (\ref{eq:shi2017}) and (\ref{eq:PCA}), we have the following result.

\begin{theorem}\label{thm1}
	Consider a linear equation $\eq\in\Eq^\ast_{[nm]}$. Assume that the weight matrix $\Wb$ is public knowledge.
	
	(i) Suppose $\alpha$ is also public knowledge. The CPA (\ref{eq:shi2017}) is not    computationally   private in  the quotient space $\Eq^\ast_{[nm]}/\sim$ against global dynamics eavesdroppers for $\eq$    under almost all initial conditions.
	
	(ii) The PCA (\ref{eq:PCA}) is not    computationally   private in   the quotient space $\Eq^\ast_{[nm]}/\sim$ against global dynamics eavesdroppers   for $\eq$ under almost all initial conditions.
\end{theorem}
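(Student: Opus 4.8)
\emph{The plan is} to recast both solvers as affine linear time-invariant recursions on the stacked state $\xb(t)=(\xb_1(t)^\top,\dots,\xb_n(t)^\top)^\top\in\R^{nm}$, and then to show that the whole trajectory pins down these recursions, hence the projections $\PP_i$, uniquely for almost every $\xb_0$. Since each $\eq_i:\Hb_i^\top\yb=\zb_i$ has (up to the scaling in Definition~\ref{def:linear_equation_equivalence}) a nonzero normal, $\PP_i$ is the affine map $\PP_i(\yb)=\Pb_i\yb+\rb_i$ with $\Pb_i=\Ib-\Hb_i\Hb_i^\top/\|\Hb_i\|^2$ and $\rb_i=(\zb_i/\|\Hb_i\|^2)\Hb_i$; the pair $(\Pb_i,\rb_i)$ records exactly the hyperplane $\{\yb:\Hb_i^\top\yb=\zb_i\}$, i.e. the class $[\eq_i]$ in $\Eq^\ast_{[nm]}/\!\sim$. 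Stacking (\ref{eq:shi2017}) gives $\xb(t+1)=\Ab\xb(t)+\bb$ with $\Ab=(\Wb\otimes\Ib)+\alpha(\Pb-\Ib)$ and $\bb=\alpha\rb$, where $\Pb=\diag(\Pb_1,\dots,\Pb_n)$ and $\rb$ stacks the $\rb_i$; stacking (\ref{eq:PCA}) gives the same form with $\Ab=(\Wb\otimes\Ib)\Pb$ and $\bb=(\Wb\otimes\Ib)\rb$. By Proposition~\ref{prop1}(i) it suffices to prove that for almost all $\xb_0$ the trajectory is generated by a unique structured $(\Ab,\bb)$, since knowing $(\Ab,\bb)$ and the public $\Wb$ (and $\alpha$) returns the $\PP_i$.

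\emph{Next} I would establish identifiability of $(\Ab,\bb)$ from the trajectory. Suppose $(\Ab,\bb)$ and $(\Ab',\bb')$ generate the same $(\xb(t))_{t\ge0}$. Writing $\Delta=\Ab-\Ab'$ and $\delta=\bb'-\bb$, the two recursions give $\Delta\xb(t)=\delta$ for all $t$, hence $\Delta\vb(t)=0$ for every increment $\vb(t):=\xb(t+1)-\xb(t)$. Differencing the affine recursion shows $\vb(t)=\Ab^t\vb(0)$, so $\Delta$ annihilates the Krylov space $\spa\{\vb(0),\Ab\vb(0),\Ab^2\vb(0),\dots\}$. If this space is all of $\R^{nm}$ then $\Delta=0$, whence $\delta=\Delta\xb(0)=0$ and $(\Ab,\bb)=(\Ab',\bb')$. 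The whole question therefore reduces to: for almost all $\xb_0$, is the initial increment $\vb(0)=(\Ab-\Ib)\xb_0+\bb$ a cyclic vector of $\Ab$?

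\emph{This is the hard part.} Cyclic vectors of a fixed $\Ab$ form the complement of a proper algebraic subvariety precisely when $\Ab$ is non-derogatory (one Jordan block per eigenvalue), and $\xb_0\mapsto\vb(0)$ is affine, so the non-cyclic $\xb_0$ form a measure-zero set provided the image affine subspace is not entirely contained in the non-cyclic variety — which I would verify by exhibiting a single admissible $\xb_0$ with $\vb(0)$ cyclic. The obstacle is thus to show that the structured $\Ab$ (built from $\Wb$, $\alpha$ and the rank-$(m{-}1)$ block projections) is generically non-derogatory and to handle the residual degenerate configurations. Here the block-diagonal structure of the \emph{difference} is the lever: for CPA $\Delta=\alpha(\Pb-\Pb')$ is block diagonal, so $\Delta\vb(t)=0$ for all $t$ forces each block $\Pb_i-\Pb_i'$ to vanish as soon as the local increments $[\vb(t)]_i=\xb_i(t+1)-\xb_i(t)$ span $\R^m$ at each node $i$ — a far weaker, and generically satisfied, requirement than full cyclicity. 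I would use this weaker span condition to dispose of the cases where $\Ab$ fails to be non-derogatory, and an analogous argument covers PCA, using that the $(i,i)$ block of $\Ab$ already isolates each $\Pb_i$.

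\emph{Finally} I would invert the structural relations. For CPA, since $\Wb$ and $\alpha$ are public, $\Pb=\alpha^{-1}\big(\Ab-(\Wb\otimes\Ib)+\alpha\Ib\big)$ and $\rb=\bb/\alpha$, so every $(\Pb_i,\rb_i)$, hence every hyperplane and thus the class $[\eq]\in\Eq^\ast_{[nm]}/\!\sim$, is read off directly. For PCA the $(i,i)$ block of $\Ab$ equals $w_{ii}\Pb_i$ with $w_{ii}>0$, recovering every $\Pb_i$ without inverting $\Wb$; the offsets then follow from $\bb=(\Wb\otimes\Ib)\rb$ (cleanly when $\Wb$ is invertible, the generic case, and otherwise using that each $\rb_i$ lies in the now-known one-dimensional direction $\range(\Ib-\Pb_i)$). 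Since recovering all $\PP_i$ is exactly recovering $[\eq]$, both solvers are not computationally private in $\Eq^\ast_{[nm]}/\!\sim$ against global dynamics eavesdroppers for almost all initial conditions, as claimed.
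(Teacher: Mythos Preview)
Your system-identification route is a genuinely different strategy from the paper's. The paper never tries to pin down the full pair $(\Ab,\bb)$; instead it works node by node and exploits one geometric fact: for CPA, rearranging (\ref{eq:shi2017}) gives
\[
\xb_i(t+1)-\sum_{j}w_{ij}\xb_j(t)\;=\;\alpha\bigl(\PP_i(\xb_i(t))-\xb_i(t)\bigr),
\]
and the right-hand side is always a scalar multiple of $\Hb_i$. So a \emph{single} time $s_i$ at which $\xb_i(s_i)$ lies off the hyperplane $\eq_i$ already reveals $\Hb_i$ up to scale, and then $\zb_i$ follows. The failure of this condition at time $t=0$ is the codimension-one event $\Hb_i^\top[\xb_0]_i=\zb_i$, which immediately gives the measure-zero exceptional set. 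The PCA case is identical with $\sum_j w_{ij}\xb_j(t)$ playing the role of $\xb_i(t)$. In short, the paper bypasses any cyclicity or span argument altogether.

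Your reduction to ``the local increments $[\vb(t)]_i$ span $\R^m$ at each node generically in $\xb_0$'' is the real content of your approach, and it is left unproved. Two concrete issues. First, $\Ab$ is \emph{fixed} (the theorem is for a fixed $\eq$ and $\Wb$), so you cannot appeal to genericity of $\Ab$; you must show that for this specific $\Ab$ the local-span condition holds for almost all $\xb_0$, which amounts to showing that certain rank-deficiency varieties in $\vb(0)$ are proper and that the affine map $\xb_0\mapsto(\Ab-I)\xb_0+\bb$ does not land entirely inside them. That is doable but not automatic, and nothing in your outline supplies it. (Incidentally, since $\Pb_i-\Pb_i'$ has rank at most two with kernel $\{\Hb_i,\Hb_i'\}^\perp$, a local span of dimension $m-1$ already suffices; this is essentially the paper's condition (b), but you still owe the genericity argument.) Second, for PCA your identifiability step uses $\Delta=(\Wb\otimes\Ib)(\Pb-\Pb')$ and then cancels $\Wb\otimes\Ib$; this requires $\Wb$ invertible, which the theorem does not assume. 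The paper's direct argument needs no such hypothesis. Your final inversion step (reading $\Pb_i$ and $\rb_i$ off $(\Ab,\bb)$) is fine, but it is conditioned on an identifiability claim that, as written, is the gap.
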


The following example illustrates how the information about the linear equation has been lost during the recursion of a distributed linear equation solver, which partially explains the intuition behind Theorem~\ref{thm1}.

\noindent{\bf Example 2.} Consider a $4$--node star graph centered at node $1$ and the following linear equation:
\begin{equation}\notag
\eq:\quad\begin{bmatrix}
3 & -1\\
1.5 & 0.8\\
-2 & 1.5\\
-1.2 & 4
\end{bmatrix}\yb=
\begin{bmatrix}
5\\
-0.1\\
-5\\
-9.2
\end{bmatrix}.
\end{equation}
Let the CPA (\ref{eq:shi2017}) with $\alpha=0.1$ and
$$\Wb=\begin{bmatrix}
0.1 & 0.3 & 0.2 & 0.4\\
0.3 & 0.7 & 0 & 0\\
0.2 & 0 & 0.8 & 0\\
0.4 & 0 & 0 & 0.6
\end{bmatrix}$$
starting from some initial condition be implemented.

\begin{figure} [htbp]
	\hspace*{0cm}
	\vspace*{0cm}
	\subfigure[Node trajectories]
	{
		\hskip 0.33cm
		\begin{minipage}{0.45\linewidth}
			\centering
			\includegraphics[width=7.2cm]{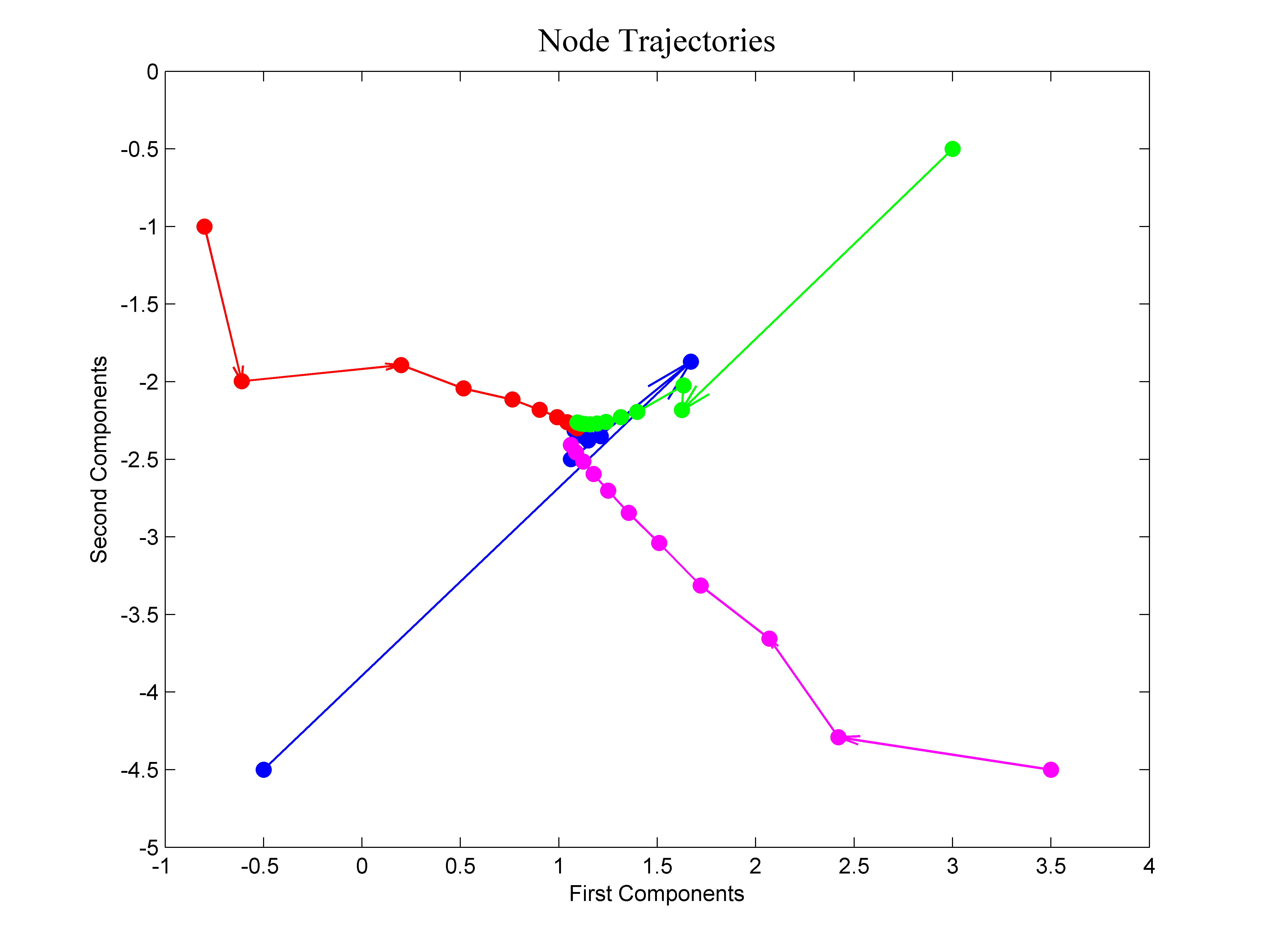}
		\end{minipage}
	}
	\subfigure[Projection computation]
	{
		\hskip 0.33cm
		\begin{minipage}{0.45\linewidth}
			\centering
			\includegraphics[width=7.2cm]{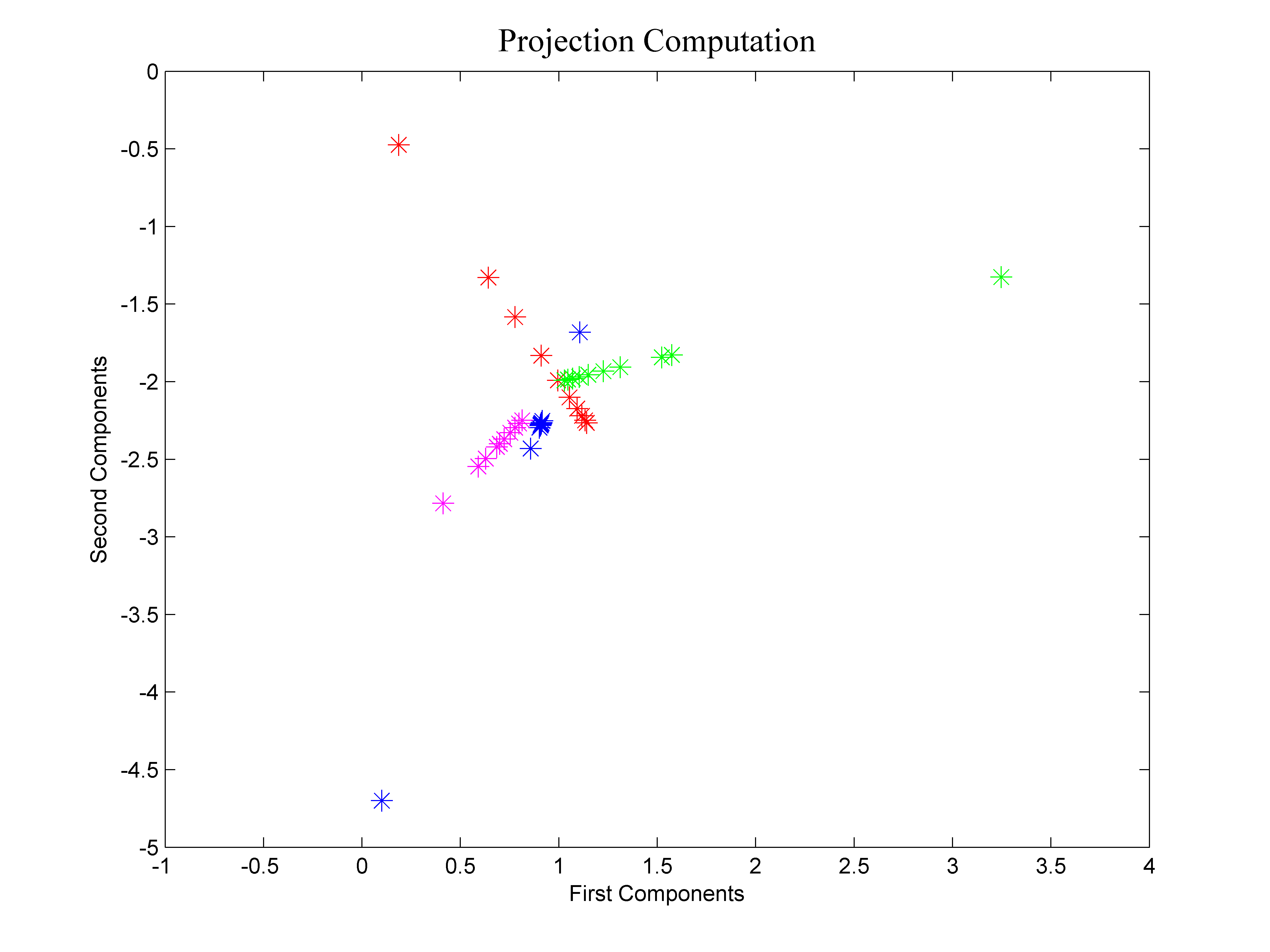}
		\end{minipage}
	}
	\subfigure[Equation reconstruction]
	{
		\hskip 0.32cm
		\begin{minipage}{1\linewidth}
			\centering
			\includegraphics[width=7.2cm]{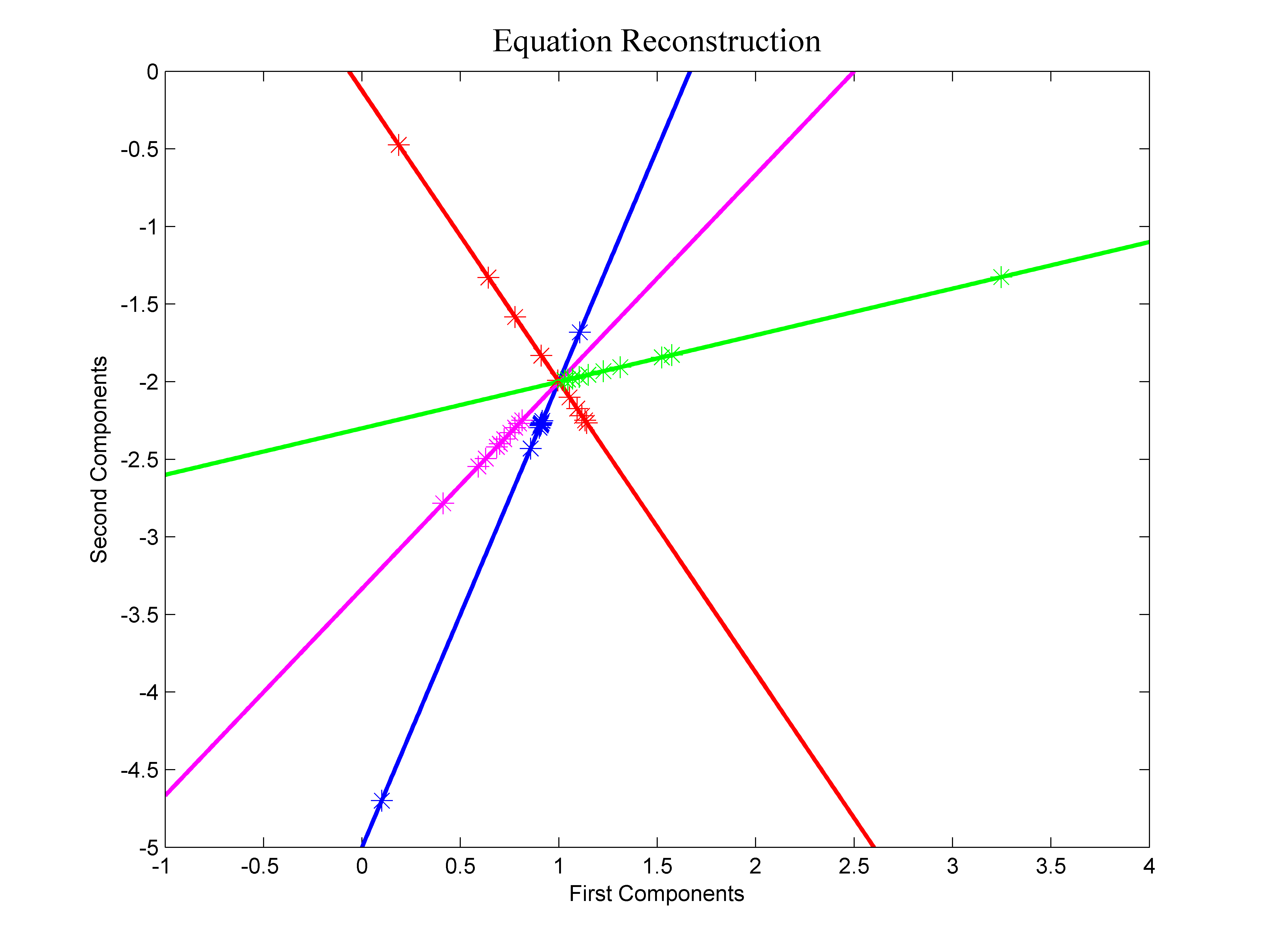}
		\end{minipage}
	}
	\caption{(a) shows the trajectories of the node states $\xb_i(t)$ with $t=0,1,\dots,10$ and $i=1,2,3,4$. (b) demonstrates the computed projection points $\PP_i\big(\xb_i(t)\big),t=0,1,\dots,9$ by eavesdroppers. (c) illustrates the disclosed solution affine space of the linear equations.}
	\label{fig:rec_global}
\end{figure}

A geometric illustration of the way that a global dynamics eavesdropper reconstructs $\eq$ is shown in Figure \ref{fig:rec_global}. In Figure \ref{fig:rec_global} (a), the states $\xb_i(t),t=0,1,\dots,10$ of nodes $i=1,2,3,4$ are plotted by the dot points of different colors. Then the projections $\PP_i\big(\xb_i(t)\big),t=0,1,\dots,9$ for all $i$ are computed by the eavesdropper according to (\ref{eq:shi2017}), as marked by the stars in Figure \ref{fig:rec_global} (b). By $\PP_i\big(\xb_i(t)\big)$ and the perpendicularity of the projections, the solution spaces of $\eq_i^\prime$s are identified as shown in Figure \ref{fig:rec_global} (c). As a result, an equivalent set $\{\eq^\prime:\eq^\prime\sim\eq\}$ has been reconstructed by the eavesdropper.

\subsection{Differentiallly Private Computation}

From the above investigation of distributed linear equation solvers, we now have a concrete example where the local datasets for a distributed computation setup may indeed be lost during the computation process. We acknowledge that  information nodes have to reveal certain amount of information about the local datasets $\mathpzc{D}_i$ in the overall closed-loop dynamical equation (\ref{eq:dcp}): If absolutely no information about $\mathpzc{D}_i$ enters (\ref{eq:dcp}), the eventual computation outcome will not depend on $\mathpzc{D}_i$ and therefore will never be a network-wide solution. It has been shown  how we can borrow the idea of differential privacy in distributed optimization problems, e.g.,  \cite{huang2015differentially,pappas2017}, following the earlier work on differentially private filtering \cite{pappas2014}.

\subsubsection{Differentiallly Private Distributed Computation}

Differential privacy is a concept that statistically quantifies the ability of an information protocol to protect the privacy of confidential databases, and meanwhile provide as accurate query output as possible. It was formally proposed in \cite{dwork2006calibrating}, with the intuition being that a query that comes from a database is nearly indistinguishable even if a record in the database is modified or removed. Recall that the dataset  $\mathpzc{D}$ is allocated over the network by partition $\mathpzc{D}= \mathpzc{D}_1 \otimes \cdots \otimes \mathpzc{D}_n$. We say $\mathpzc{D}'= \mathpzc{D}_1' \otimes \cdots \otimes \mathpzc{D}_n'$ is adjacent with $\mathpzc{D}$  if there exists $i\in\{1,\dots,n\}$ such that
$$ \mathpzc{D}_j= \mathpzc{D}_j^\prime,\quad \forall j\neq i.$$
Recall that $\mathscr{M}_{\xb_0}(\cdot)$ is the mechanism along a randomized distributed computing protocol defined by $
\mathscr{M}_{\xb_0}(\mathpzc{D})=\big(\xb(t)\big)_{t=0}^\infty.
$ We can then borrow conventional differential privacy notion to define that $\mathscr{M}_{\xb_0}$ is  $\epsilon$-differentially private with some privacy budget $\epsilon>0$ if for all adjacent $\mathpzc{D}, \mathpzc{D}^\prime$, and for all $\mR\subset\range(\mathscr{M}_{\xb_0})$,
\begin{equation}\notag
\Pr(\mathscr{M}_{\xb_0}(\mathpzc{D})\in\mR) \le e^{\epsilon} \cdot \Pr(\mathscr{M}_{\xb_0}(\mathpzc{D}^\prime)\in\mR),
\end{equation}
where the probability is taken over the randomness used by $\mathscr{M}_{\xb_0}$.

In the differential privacy literature \cite{dwork2006calibrating}, one popular approach is the so--called Laplace mechanism, where Laplace noise with appropriate randomness is added to the output returned from the input data. One can apply this approach to the distributed computing protocol  (\ref{eq:dcp}) by simply viewing $\mathscr{M}_{\xb_0}$ as a data processing mechanism and resolving differential privacy challenges with added noise \cite{huang2015differentially,pappas2017}.  However, adding noise to (\ref{eq:dcp}) will in general  make it impossible for the protocol to converge to the true value, and in many cases one can only establish error guarantee that is increasingly conservative as time moves forward under rather restrictive conditions \cite{huang2015differentially,pappas2017}. In contrast, classical differential privacy results manage to achieve differential privacy with tolerable errors that vanish with an inceasing dimension of the input data \cite{dwork2006calibrating}. The reason for this  sharp comparison is that, the  dimension of the outputs of $\mathscr{M}_{\xb_0}$ (i.e., the $\mathbf{x}(t)$) increases linearly with time steps, controlling the error   the way how a standard differentially private query system works becomes challenging or even impossible. In the following, we illustrate this point for distributed LAE solvers.

\subsubsection{Differentially Private Network LAE Solving}
For network linear equations, we can define adjacency properties of different datasets over the quotient space.

\begin{definition}
	Consider two linear equations $\eq:\Ab\yb=\bb,\ \eq^\prime:\Ab^\prime\yb=\bb^\prime\ \in\Eq^\ast_{[nm]}$. We call them to be $(\dA,\db)$--adjacent if there exists $i\in\{1,\dots,n\}$ such that
	$ \mathcal{E}_j\sim\mathcal{E}_j^\prime,\forall j\neq i$ and
	$$\bigg\|\frac{\Ab_i\Ab_i^\top}{\Ab_i^\top\Ab_i}-\frac{\Ab_i^\prime\Ab_i^{\prime\top}}{\Ab_i^{\prime\top}\Ab_i^\prime}\bigg\|\le\dA,\ \bigg\|\frac{\bb_i\Ab_i}{\Ab_i^\top\Ab_i}-\frac{\bb_i^\prime\Ab_i^\prime}{\Ab_i^{\prime\top}\Ab_i^\prime}\bigg\|\le\db.$$
\end{definition}

\begin{definition}Consider a distributed solver for linear equations in the space $\Eq^\ast_{[nm]}$.
	Let $\mathscr{M}_{\xb_0}(\cdot)$ be the resulting mechanism   with $
	\mathscr{M}_{\xb_0}(\mathcal{E})=\big(\xb(t)\big)_{t=0}^\infty.
	$ Then the solver is called $\epsilon-$differenially private under $(\dA,\db)$--adjacency if
	\begin{equation}\notag
	\Pr(\mathscr{M}_{\xb_0}(\mathcal{E})\in\mR) \le e^{\epsilon} \cdot \Pr(\mathscr{M}_{\xb_0}(\mathcal{E}^\prime)\in\mR),
	\end{equation}
	for all  $(\dA,\db)$--adjacent linear equations $\eq:\Ab\yb=\bb,\ \eq^\prime:\Ab^\prime\yb=\bb^\prime\ \in\Eq^\ast_{[nm]}$, for any initial condition $\xb(0)\in\R^{nm}$, and for all network state trajectories in $\range(\mathscr{M}_{\xb_0})$.
\end{definition}

We assume the network of nodes have prior knowledge about a linear equation $\eq$ that one of its solutions falls in a compact and convex set $\Omega\subset\R^m$. We define a projection $\PPo:\R^m\to\R^m$ with $\PPo(\ub)=\inf\limits_{\vb\in\Omega}\|\vb-\ub\|$. By adapting the deterministic CPA (\ref{eq:shi2017}), we now present the following algorithm as a Laplace--mechanism--based differentially private distributed LAE solver, where Laplace noise $\Lap^m(c\phi^t)$  with $c>0,\ 0<\phi<1$ is injected prior to node state broadcasting at each time $t$. Define $\alpha(t)=\lambda\psi^t$ with $\lambda>0$ and $0<\psi<1$  as a time--varying step size. Following the idea of  \cite{huang2015differentially}, we propose the following differentially private distributed linear equation solver based on the Laplace approach.
\begin{algorithm}[ht]{{\bf DP--DLES} Differentially Private Distributed Linear Equation Solver}
	\begin{algorithmic}[1]
		\STATE Set $t\gets0$ and $\xb(0)$.
		\STATE Each node $i\in\mV$ computes $\xbf_i(t)=\PPo\big(\xb_i(t)\big)$.
		\STATE Each node $i\in\mV$ draws $\omegab_i(t)$ from the distribution $\Lapm(c\phi^t)$.
		\STATE Each node $i\in\mV$ computes $\xbs_i(t)\gets\xbf_i(t)+\omegab_i(t)$ and propagates $\xbs_i(t)$ to all its neighbors $j\in\mN_i$.
		\STATE Each node $i\in\mV$ computes $\xb_i(t+1) \gets \sum\limits_{j\in\mN_i}\Wb_{ij}\xbs_j(t)+\alpha(t)\bigg(\PP_i\big(\xbf_i(t)\big)-\xbf_i(t)\bigg)$.
		\STATE Set $t\gets t+1$ and go to Step 2.
	\end{algorithmic}
\end{algorithm}

We present the following results.
\begin{theorem}\label{thm:sensitivity_dp_CPA}
	Suppose $\Wb\in\R^{n\times n}$ has full rank. Let $B=\sup\limits_{\vb\in\Omega}\|\vb\|$. Then the DP--DLES preserves $\epsilon-$differenial privacy  under $(\dA,\db)$--adjacency if $\psi<\phi$ with
	\begin{equation}\label{eq:dp}
	\frac{\phi}{\phi-\psi}\cdot  \frac{\lambda}{c} \cdot \frac{  \sqrt{nm}(B\dH+\dz)}{\sigm(\Wb)} \leq \epsilon.
	\end{equation}
\end{theorem}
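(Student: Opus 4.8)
The plan is to read DP--DLES as a Laplace mechanism and to bound, pointwise over trajectories, the ratio of the output densities induced by two $(\dH,\dz)$--adjacent equations $\eq:\Ab\yb=\bb$ and $\eq^\prime:\Ab^\prime\yb=\bb^\prime$. First I would make explicit where the data enters the recursion: the orthogonal projection attached to node $i$ is
\[
\PP_i(\ub)=\Big(\Ib-\tfrac{\Ab_i\Ab_i^\top}{\Ab_i^\top\Ab_i}\Big)\ub+\tfrac{\bb_i\Ab_i}{\Ab_i^\top\Ab_i},
\]
so the only data--dependent driving term is the stacked correction $\mathbf{g}(t)$ whose $i$--th block is $\PP_i(\xbf_i(t))-\xbf_i(t)$. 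In Kronecker form (with $\Ib$ the $m\times m$ identity) Step~5 reads $\xb(t+1)=(\Wb\otimes\Ib)\big(\xbf(t)+\omegab(t)\big)+\alpha(t)\mathbf{g}(t)$, where $\xbf(t)=\PPo(\xb(t))$ is a deterministic function of the observed state. Because $\Wb$ has full rank, $\Wb\otimes\Ib$ is invertible, so the injected noise is recovered exactly from two consecutive observed states and the data,
\[
\omegab(t)=(\Wb\otimes\Ib)^{-1}\big[\xb(t+1)-\alpha(t)\mathbf{g}(t)\big]-\xbf(t).
\]
Thus, with $\xb_0$ fixed, the map $(\omegab(t))_t\mapsto(\xb(t))_t$ is a bijection whose Jacobian is block lower--triangular with constant diagonal blocks $\Wb\otimes\Ib$, hence of data--independent determinant.

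Using this change of variables I would write the trajectory density as the pushforward of the Laplace noise density; the Jacobian factor cancels in the ratio between $\eq$ and $\eq^\prime$, leaving
\[
\frac{\pdf\big(\mathscr{M}_{\xb_0}(\eq)\big)}{\pdf\big(\mathscr{M}_{\xb_0}(\eq^\prime)\big)}=\prod_{t\ge0}\frac{\pdf_{\omegab(t)}\big(\omegab^{\eq}(t)\big)}{\pdf_{\omegab(t)}\big(\omegab^{\eq^\prime}(t)\big)},
\]
where $\omegab^{\eq}(t)$ and $\omegab^{\eq^\prime}(t)$ are the noises recovered at the same trajectory under the two datasets. Since each coordinate of $\omegab(t)$ is i.i.d.\ $\Lap(c\phi^t)$, the reverse triangle inequality bounds each factor by $\exp\big(\|\omegab^{\eq}(t)-\omegab^{\eq^\prime}(t)\|_1/(c\phi^t)\big)$.

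It remains to control the per--step sensitivity. At a fixed trajectory the terms $\xb(t+1)$ and $\xbf(t)$ are common to both datasets, so $\omegab^{\eq}(t)-\omegab^{\eq^\prime}(t)=-\alpha(t)(\Wb\otimes\Ib)^{-1}\big(\mathbf{g}^{\eq}(t)-\mathbf{g}^{\eq^\prime}(t)\big)$. By adjacency the equations differ at a single node $i$, while for every $j\neq i$ they are equivalent and hence share a solution set and the same projection $\PP_j$, so all those blocks vanish. For the remaining block I would insert the projection formula into $\PP_i(\ub)-\PP_i^\prime(\ub)$ and apply the two adjacency bounds together with $\|\xbf_i(t)\|\le B$ (valid since $\xbf_i(t)\in\Omega$) to obtain $\|\mathbf{g}^{\eq}(t)-\mathbf{g}^{\eq^\prime}(t)\|\le B\dH+\dz$. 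Combining $\|(\Wb\otimes\Ib)^{-1}\|=1/\sigm(\Wb)$ (the smallest singular value of the symmetric $\Wb$) with $\|\cdot\|_1\le\sqrt{nm}\,\|\cdot\|_2$ on $\R^{nm}$ and $\alpha(t)=\lambda\psi^t$ yields
\[
\frac{\|\omegab^{\eq}(t)-\omegab^{\eq^\prime}(t)\|_1}{c\phi^t}\le\frac{\sqrt{nm}(B\dH+\dz)}{\sigm(\Wb)}\cdot\frac{\lambda}{c}\Big(\frac{\psi}{\phi}\Big)^t.
\]
Summing over $t\ge0$ produces a geometric series that converges exactly when $\psi<\phi$, with sum $\phi/(\phi-\psi)$, reproducing the left--hand side of (\ref{eq:dp}); imposing it to be at most $\epsilon$ bounds the density ratio by $e^{\epsilon}$, which integrates over any $\mR\subset\range(\mathscr{M}_{\xb_0})$ to the claimed $\epsilon$--differential privacy.

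I expect the main obstacle to be making the change--of--variables step fully rigorous: one must justify that the exact noise--recovery identity turns the infinite--horizon comparison of trajectory laws into a convergent sum of per--step sensitivities, and that the pushforward--density manipulation is legitimate on the full (infinite--dimensional) trajectory space, which is precisely where the full rank of $\Wb$ is indispensable and where the cancellation of the data--independent Jacobian must be argued carefully. The accompanying conceptual point is that the decaying step size $\alpha(t)=\lambda\psi^t$ has to decay faster than the noise scale $c\phi^t$, so that the privacy leaked per step, scaling like $\alpha(t)/\phi^t=(\lambda/c)(\psi/\phi)^t$, stays summable over infinite time; this is exactly why $\psi<\phi$ is the natural hypothesis.
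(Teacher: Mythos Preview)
Your proposal is correct and follows essentially the same approach as the paper's proof: invert $\Wb\otimes\Ib_m$ to recover the noise from consecutive states, bound the per--step Laplace density ratio by $\exp\big(\alpha(t)\sqrt{nm}(B\dH+\dz)/(c\phi^t\sigm(\Wb))\big)$ via the adjacency condition and $\|\xbf_i(t)\|\le B$, and sum the geometric series in $(\psi/\phi)^t$. The only packaging difference is that the paper treats each step as a mechanism $\mM_t$ and invokes the composition theorem for differential privacy to aggregate the bounds, which sidesteps the infinite--dimensional change--of--variables issue you flag at the end; the substance is identical.
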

Note that $\epsilon$ is the privacy budget which is expected to be a small number. As a result, for a fixed amount of noise injected to the computation process represented by $c$ and $\phi$ in $\Lapm(c\phi^t)$, $\lambda$ and $\psi$ to be small enough for (\ref{eq:dp}) to hold. The values of $\lambda$ and $\psi$, however, determine how much local data information $\mathpzc{D}_i$ is used in the computation step through the $\mathpzc{P}_i$. Therefore, small $\lambda$ and $\psi$ means greater computation error will take place. This reveals the privacy vs computation dilemma, which can be seen from the following example.

\noindent {\bf Example 3.} Consider the same linear equation, network structure and weight matrix $\Wb$ as Example 1. We let $\Omega$ be a ball centered at $\yb^\ast$ with radius one. We tune the parameters of DP--DLES such that DP--DLES preserves $(\epsilon,1,1)$--differential privacy with $\epsilon=2,4,6,8$. Corresponding to the parameter setup for  each $\epsilon$, we independently execute DP--DLES from starting from initial states sampled from $[-1,1]$ at random for $10^3$ times, and plot the averaged $$ \|\sum_{i=1}^4\xb_i(t)-\yb^\ast\|
$$ in Figure \ref{fig:dp}. It is clear from the numerical result  that when better differential privacy is achieved,
worse computational accruracy is recorded.

\begin{figure}[ht]
	\centering
	\includegraphics[width=9cm]{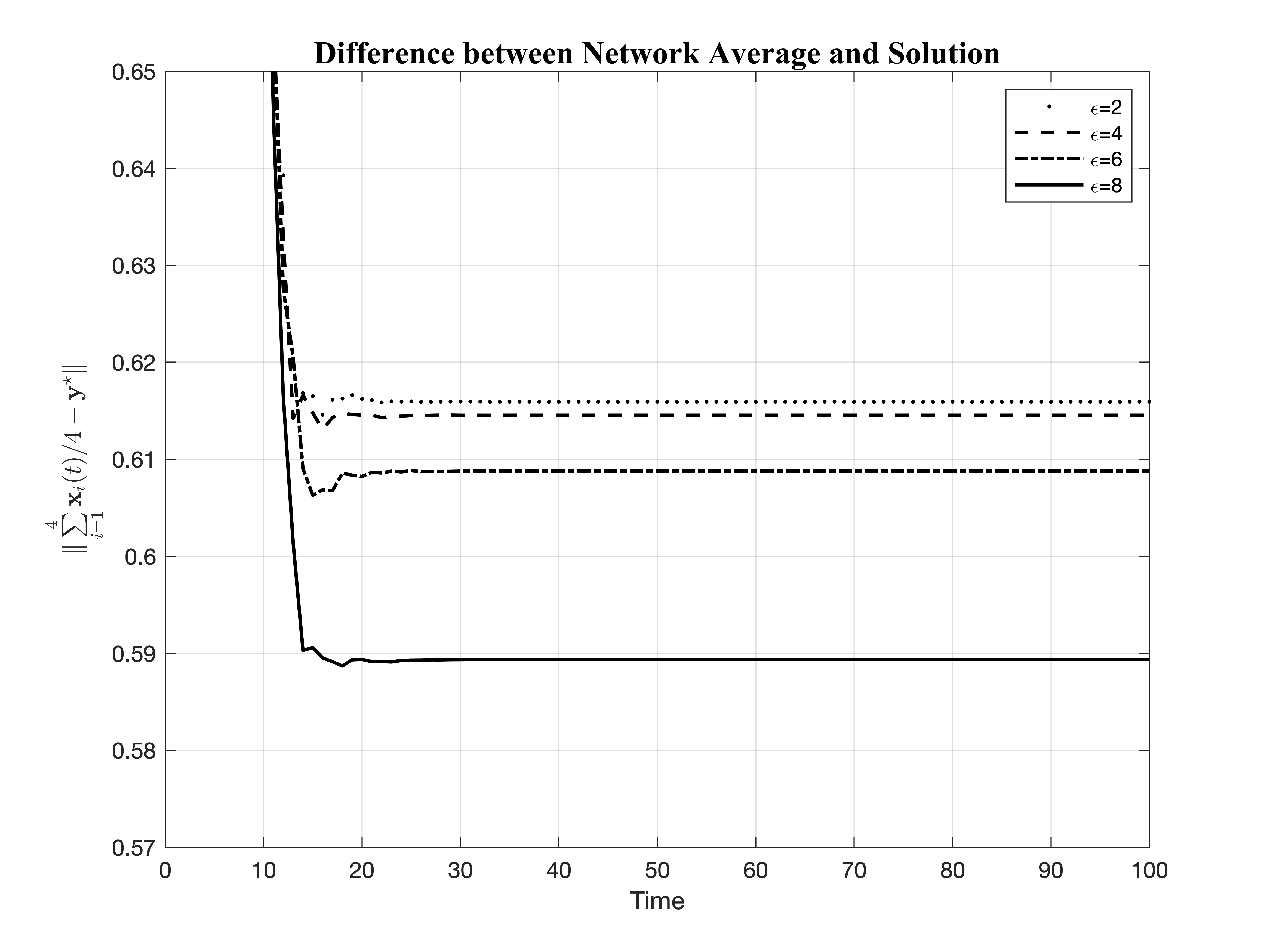}
	\caption{The plot of the averaged $\|\sum_{i=1}^4\xb_i(t)-\yb^\ast\|$ along DP--DLES with $\epsilon=2,4,6,8$.}
	\label{fig:dp}
\end{figure}

\section{PPSC Mechanism}\label{Sec:PPSC}

In view of the above discussions, we now see that on one hand, dynamical privacy indeed exists in distributed computing processes; and on the other hand, adding noise to the computational recursions improves privacy but in the meantime jeopardizes the computation accuracy. Noticing the fact that in average consensus and in consensus-based distributed computing algorithms, the consensus manifold is where the network-wide solution lies in, for which the node states can stay by local interactions even
in the presence of additional random noise. In this section, we show that  we can effectively use this observation to hide node privacy, under a so-called PPSC mechanism.

\subsection{Mechanism Definition}
\begin{definition}
	Let $\beta_i\in\mathbb{R}^m$ be held by node $i$ over the network $\mathrm{G}=(\mathrm{V},\mathrm{E})$.
	An algorithm running over the network is called a distributed  Privacy-Preserving-Summation-Consistent (PPSC) mechanism, which produces output $\bm{\beta}^\sharp=(\beta_1^{\sharp\top}\ \dots\ \beta_n^{\sharp\top})^\top$ from the network input $\bm{\beta}=(\beta_1^\top\ \dots\ \beta_n^\top)^\top$, if the following conditions hold:
	\begin{itemize}
		\item [(i)] {\em (Graph Compliance)} Each node $i$ communicates only  with its neighbors in the set $\mathrm{N}_i$;
		\item[(ii)] {\em (Local Privacy Preservation)} Each node $i$ never reveals its initial value $\beta_i$ to any other agents or a third party;
		\item[(iii)] {\em (Global Privacy Preservation)} $\bm{\beta}$ is non-identifiable given $\bm{\beta}^\sharp$ with even an infinite number of independent samples;
		\item[(iv)] {\em (Summation Consistency)} $\sum\limits_{i=1}^n \beta^{\sharp}_i=\sum\limits_{i=1}^n \beta_i$.
	\end{itemize}
\end{definition}

The condition (i) requires that the information flow of the algorithm must comply with the underlying graph; the condition (ii) says that during running of the algorithm no node will have to directly reveal its initial value $\beta_i$; the condition (iii) further suggests that even if final node states are known, it should not be possible to use that knowledge to recover $\beta_i$; the condition (iv) asks for that sum of the output of the algorithm should be consistent with that of the input. The overall mechanism of PPSC is denoted as the following mapping:
\begin{align}
\bm{\beta}^\sharp=\mathscr{P}(\bm{\beta}).
\end{align}
\subsection{PPSC for Distributed Computation}\label{sec:app}
Now we show that the PPSC mechanism can be used as a privacy preservation step for distributed computing algorithms.

\medskip

\noindent{\bf [Average Consensus]} For the average consensus algorithm (\ref{eq:ave}), we can   run a distributed PPSC algorithm over $(\beta_1,\dots,\beta_n)$ and generate $(\beta_1^\sharp,\dots,\beta_n^\sharp)$. Letting $\mathbf{x}_i(0)=\beta_i^\sharp,\ i\in\mV$ in instead, there still holds $\lim\limits_{t\to\infty}\mathbf{x}_i(t)=\sum\limits_{j=1}^n\beta_j/n$, but $\beta_i$ has been kept as private information for node $i$.

\medskip

\noindent{\bf [Network Linear Equations]}
For the network linear equation (\ref{eq:LAE}), the PPSC mechanism can provide    distributed solver with privacy guarantees. \medskip
\begin{algorithm}[H]
	{$\mathbf{Privacy\textnormal{ }Preserving\textnormal{ }Linear\textnormal{ }Equation\textnormal{ }Solver}$}
	\begin{algorithmic}[1]
		\STATE Set $t\gets 0$ and $y_1(t),\dots,y_n(t)\in\R^r$;
		\STATE Run a PPSC algorithm with input $y_1(t),\dots,y_n(t)$ and produce output $y^{\sharp}_1(t),\dots,y^{\sharp}_n(t)$;
		\STATE Run the averaging consensus algorithm with input $y^{\sharp}_1(t),\dots,y^{\sharp}_n(t)$ and output the agreement $\bar{y}(t)$ at each node $i$;
		\STATE Each node $i$ computes $y_i(t+1)\gets\mathpzc{P}_i(\bar{y}(t))$. Set $t\gets t+1$ and go to Step 2.
	\end{algorithmic}
\end{algorithm}
The above privacy preserving linear equation solver produces the following recursion:
\begin{equation}\label{eq:pp_linear_equation_solver}
y_i(t+1)=\mathpzc{P}_i\bigg(\sum\limits_{j=1}^n \frac{y_j(t)}{n}\bigg),\ t\in\Z^{\ge 0},\ i\in\mV,
\end{equation}
which falls to the category of the projected consensus algorithms in \cite{nedic10}. According to Proposition 2 in \cite{nedic10}, each node state $y_i(t)$ converges to a solution of the linear equation (\ref{eq:LAE}) if it admits exact solutions, i.e., $\zb\in\spa\{\Hb\}$.

\medskip

\noindent{\bf [Distributed Gradient Descent]}
With the PPSC mechanism, the distributed gradient descent algorithm (\ref{eq:gradient}) can also be made privacy-preserving, by the following procedure.

\begin{algorithm}[H]
	{$\mathbf{Privacy\textnormal{ }Preserving\textnormal{ }Distributed\textnormal{ }Optimizer}$}
	\begin{algorithmic}[1]
		\STATE Set $t\gets 0$ and $y_1(t),\dots,y_n(t)\in\R^r$;
		\STATE Each node $i$ computes $y_i^\flat(t)\gets y_i(t)-\frac{1}{\sqrt{t+1}}\nabla f_i(y_i(t))$;
		\STATE Run a PPSC algorithm with input $y^{\flat}_1(t),\dots,y^{\flat}_n(t)$ and produce output $y^{\sharp}_1(t),\dots,y^{\sharp}_n(t)$;
		\STATE Run the averaging consensus algorithm with input $y^{\sharp}_1(t),\dots,y^{\sharp}_n(t)$ and output the agreement $\bar{y}(t)$ at each node $i$;
		\STATE Each node $i$ sets $y_i(t+1)\gets\bar{y}(t)$.
		\STATE Set $t\gets t+1$ and go to Step 2.
	\end{algorithmic}
\end{algorithm}
\noindent The underlying dynamics of the privacy-preserving distributed optimizer is described by
\begin{equation}\label{eq:pp_distributed_opt}
y_i(t+1)=\sum\limits_{j=1}^n \frac{y_j(t)}{n}-\frac{1}{\sqrt{t+1}}\sum\limits_{j=1}^n\frac{\nabla f_j(y_j(t))}{n}.
\end{equation}

\subsection{Privacy Preservation by PPSC}

From the above three examples of integrating a PPSC step in each round of recursion of distributed computing algorithms,  we can see that the PPSC mechanism effectively achives the following advantages in a distributed computing protocol:
\begin{itemize}
	\item[(i)] Nodes keep their real states $\mathbf{x}_i(t)$ strictly from themselves, and receive only virtual states  $\mathbf{x}_i^\sharp(t)$  produced by the PPSC mechanism $\mathscr{P}$. The impossibility of identifying $\mathbf{x}(t)$ from $\mathbf{x}^\sharp(t)=\mathscr{P}(\mathbf{x}(t))$ (with even an arbitrary number of realizations of $\mathbf{x}^\sharp(t)$ for each $t$) provides the nodes with plausible deniability about their real states $\mathbf{x}(t)$ for all $t$, which immediately implies the same deniability about the $\mathpzc{D}_i$.
	\item[(ii)] The price paid in the PPSC step is that averaging the $\mathbf{x}^\sharp(t)=\mathscr{P}(\mathbf{x}(t))$ may take more time than averaging the $\mathbf{x}(t)$ over the same graph $\mathrm{G}$.
	Since $\mathbf{x}_i^\sharp(t)$ has proven privacy guarantee, there is no need to share it with only neighbors for privacy concerns. Each node $i$ can broadcast $\mathbf{x}_i^\sharp(t)$ to the entire network whenever possible, and therefore may compensate the efficiency lost in the averaging step.
	\item[(iii)] The sum consistency property from PPSC mechanism guarantees that the computation produces the same result  with or without the PPSC privacy preservation steps in the above consensus-based linear equation solving and gradient descent algorithms.
\end{itemize}
In Part II of the paper, we will show that conventional gossip algorithms can be used to realize PPSC mechanism in distributed manners over a network by only  a finite number of steps, and beyond non-identifiability of the PPSC mechanism $\mathscr{P}$, we can even easily ensure   $\mathscr{P}$ to be differentially private.

We also acknowledge that a weakness of the above PPSC-based distributed computation  is that it involves running PPSC and average consensus to completion for every gradient step, which is likely to be very slow over the network $\mathrm{G}$ and lead to truncation errors when we use only a finite number of average consensus. However, since $\betab^\sharp$ can be proven  privacy preserving, each node $i$ can broadcast $\beta_i$ to the entire network for the average consensus, instead of communicating with neighbors in $\mathrm{G}$ only. As a result, averaging $\betab^\sharp$ can be   much faster compared to averaging $\betab$ over $\mathrm{G}$, producing  the same consensus value. This would not resolve the effect of finite-step  truncation errors, which is left to future work.

\section{Local Dynamical Privacy}\label{sec:local}
Finally, we turn our attention to local dynamics eavesdroppers, where two types of behaviors, passive or active, are distinguished for the role of a local eavesdropper in the network dynamics.

\subsection{Local Dynamics Eavesdroppers}

We introduce the following definition.

\begin{definition}\label{def:local_eavesdroppers}
	(i) A passive local dynamics eavesdroppers is a party adherent to one node $i\in\mV$ and therefore has access to the observations of $\{\xb_j(t),j\in\mN_i\}_{t=0}^\infty$.
	
	(ii) An active local dynamics eavesdroppers is a party adherent to one node $i\in\mV$, has access to the observations of $\{\xb_j(t),j\in\mN_i\}_{t=0}^\infty$,  and can arbitrarily  alter the dynamics of $\xb_i(t)$.
\end{definition}
In either of the two cases in Definition \ref{def:local_eavesdroppers}, we assume without loss of generality that such a local dynamics eavesdropper is precisely a node $i\in\mV$. From the perspectives of local eavesdroppers, recovering information about the $\mathpzc{D}$ from $\{\xb_j(t),j\in\mN_i\}_{t=0}^\infty$ defines a structured system identification problem.
Let $\mN_i=\{j_1,\dots,j_{\left|\mN_i\right|}\}$ with $j_1<\dots<j_{\left|\mN_i\right|}$. Define $\Eb_i\in\R^{\left|\mN_i\right|\times n}$ as the matrix whose entries in the $k$--th row are all zeros except for the $k j_k$--th entry being one. Denote $\Fb=\Wb\otimes\Ib_m-\alpha\diag\bigg(\frac{\Hb_1\Hb_1^\top}{\Hb_1^\top\Hb_1},\dots,\frac{\Hb_n\Hb_n^\top}{\Hb_n^\top\Hb_n}\bigg)$. 	Introduce 
\begin{align*}
\mathcal{T}=\big\{\Tb\in\R^{nm\times nm}:\Tb\textnormal{ is invertible, } (\Eb_i\otimes\Ib_m)\Tb=[\Ib_{m\left|\mN_i\right|}\ 0]\big\}.
\end{align*} For the CPA (\ref{eq:shi2017}) against passive local dynamics eavesdroppers, we can establish the following understanding based on the result on obtaining the minimal realization of LTI systems from input--output data in time domain established in \cite{budin1971minimal}.

\begin{theorem}\label{thm2}
	Consider the distributed linear equation solver CPA (\ref{eq:shi2017}) for $\eq\in\Eq^\ast_{[nm]}$. Assume that the weight matrix $\Wb$ and the step size $\alpha$ are public knowledge. Suppose the following two conditions hold: (i) $(\Fb,\Eb_i\otimes\Ib_m)$ is a completely observable pair; (ii) There exist $t_1<\dots<t_{nm}$ such that the vectors $\xb(t_k),\ k=1,\dots,nm$ are linearly independent.
	Then for the passive local dynamics eavesdropper $i$, it can be determine from $\{\xb_j(t),j\in\mN_i\}_{t=0}^\infty$ a subset $\Eq^\dagger\subset\Eq^\ast_{[nm]}$ such that $\eq\in\Eq^\dagger$, where
	\begin{align*}\notag
	\Eq^\dagger=\bigg\{\Ab\yb=\bb:\diag\bigg(\frac{\Ab_1\Ab_1^\top}{\Ab_1^\top\Ab_1},\dots,\frac{\Ab_n\Ab_n^\top}{\Ab_n^\top\Ab_n}\bigg)=(\Wb\otimes\Ib-\Tb^{-1}\Fb_\ast \Tb)/\alpha;\bb=\Ab\lim\limits_{t\to\infty}\xb_i(t);\ \Tb\in\mathcal{T}\bigg\}.
	\end{align*}
	Here $\Fb_\ast \in\R^{nm\times nm}$ depends on $\{\xb_j(t),j\in\mN_i\}_{t=0}^\infty$ only.
\end{theorem}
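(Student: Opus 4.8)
The plan is to recast the CPA recursion~(\ref{eq:shi2017}) as an autonomous linear time-invariant system observed through the neighbor states, and then to invoke the minimal-realization theory of \cite{budin1971minimal} to pin down the system matrix up to the residual ambiguity encoded by $\mathcal{T}$. First I would stack the node updates into the global affine recursion $\xb(t+1)=\Fb\xb(t)+\gb$, where $\Fb$ is exactly the matrix defined before the theorem and $\gb$ collects the constant terms $\alpha\,\Hb_i\zb_i/(\Hb_i^\top\Hb_i)$ produced by the projections $\PP_i$. The eavesdropper's observation is $\yb(t)=(\Eb_i\otimes\Ib_m)\xb(t)$, so the relevant object is the LTI pair $(\Fb,\Eb_i\otimes\Ib_m)$. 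Since CPA converges to the common solution $\yb^\ast=\lim_{t\to\infty}\xb_i(t)$, I would subtract the equilibrium $\xb^\ast$ (the all-$\yb^\ast$ vector, whose neighbor block is observable as $\lim_{t\to\infty}\yb(t)$) to reduce the data to the genuinely homogeneous free response $\tilde\xb(t+1)=\Fb\tilde\xb(t)$, $\tilde\yb(t)=(\Eb_i\otimes\Ib_m)\tilde\xb(t)$.

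Next I would identify $\Fb$ from the free-response outputs. Condition (i), complete observability of $(\Fb,\Eb_i\otimes\Ib_m)$, guarantees that the state can be lifted from the output sequence, while condition (ii), the existence of $nm$ linearly independent snapshots $\xb(t_k)$, guarantees that the trajectory excites all $nm$ modes (equivalently, that the associated Hankel matrix has full rank $nm$), so the minimal realization extracted from the data has dimension exactly $nm$ with no order reduction. Applying the realization procedure of \cite{budin1971minimal} then yields a canonical realization $(\Fb_\ast,[\Ib_{m|\mN_i|}\ 0])$ whose entries depend on $\{\xb_j(t),j\in\mN_i\}_{t=0}^\infty$ alone. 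The structural fact I would use is that any two minimal realizations of the same external behaviour are related by a unique similarity transformation; matching the true output matrix $\Eb_i\otimes\Ib_m$ to the canonical $[\Ib_{m|\mN_i|}\ 0]$ forces the admissible transformations to be precisely $\mathcal{T}$, whence $\Fb=\Tb^{-1}\Fb_\ast\Tb$ for some $\Tb\in\mathcal{T}$ and, conversely, every $\Tb\in\mathcal{T}$ produces a realization indistinguishable from the data.

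With $\Fb$ localized to the orbit $\{\Tb^{-1}\Fb_\ast\Tb:\Tb\in\mathcal{T}\}$, the remaining steps are algebraic. Because $\Wb$ and $\alpha$ are public, the eavesdropper forms $(\Wb\otimes\Ib-\Tb^{-1}\Fb_\ast\Tb)/\alpha$ and reads off its $i$-th diagonal block as the rank-one projector $\Hb_i\Hb_i^\top/(\Hb_i^\top\Hb_i)$; its unit eigenvector of eigenvalue one recovers $\Hb_i$ up to a nonzero scalar, that is, it recovers $\eq_i$ up to the equivalence of Definition~\ref{def:linear_equation_equivalence}. Finally, since the true solution satisfies $\Ab\yb^\ast=\bb$, setting $\bb=\Ab\lim_{t\to\infty}\xb_i(t)$ fixes the right-hand side consistently with each row scaling of $\Ab$. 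Collecting the candidates over all $\Tb\in\mathcal{T}$ produces exactly the set $\Eq^\dagger$, and the true equation corresponds to the true $\Tb$, so $\eq\in\Eq^\dagger$.

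I expect the main obstacle to be the exact identification of the ambiguity class with $\mathcal{T}$: one must show not only that the true $\Fb$ lies in the $\mathcal{T}$-orbit of $\Fb_\ast$, but that $\mathcal{T}$ is neither too large (every such $\Tb$ genuinely preserves the observed behaviour) nor too small, and this demands a careful adaptation of the input--output realization theory of \cite{budin1971minimal} to the present free-response setting, where there is no exogenous input but only the constant forcing $\gb$ together with the initial condition. Verifying that condition (ii) is exactly the richness needed for the extracted realization to attain full order $nm$ --- so that all $n$ diagonal projectors, rather than a lower-dimensional shadow, are recoverable --- is the other delicate point.
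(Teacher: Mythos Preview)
Your proposal is essentially the paper's own argument: reduce the CPA to the homogeneous system $\gammab(t+1)=\Fb\gammab(t)$ by subtracting the consensus equilibrium $\1\otimes\yb^\ast$ (which the eavesdropper learns from $\lim_{t\to\infty}\xb_j(t)$), apply Budin's minimal-realization procedure \cite{budin1971minimal} under conditions (i)--(ii) to obtain $(\Fb_\ast,[\Ib_{m|\mN_i|}\ 0])$ from the output data, invoke the similarity freedom of minimal realizations to land in $\mathcal{T}$, and finally recover $\bb$ via $\bb=\Ab\yb^\ast$. Your worry about $\mathcal{T}$ being ``neither too large nor too small'' is more than the theorem requires---the statement only asks that $\eq\in\Eq^\dagger$, which follows once the true similarity lies in $\mathcal{T}$---so that obstacle largely evaporates.
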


For the CPA (\ref{eq:shi2017}) against an active local eavesdropper $i$, we provide the following result based on the time--domain--based system identification method proposed in \cite{mckelvey1995subspace}.
\begin{theorem}\label{thm3}
	Consider the distributed linear equation solver CPA (\ref{eq:shi2017}) for $\eq\in\Eq^\ast_{[nm]}$. Assume that $\eq$ admits a unique solution $\yb^\ast\in\R^m$. Assume further that the weight matrix $\Wb$, the step size $\alpha$  and the solution $\yb^\ast$ are public knowledge. Then there is a simple strategy for an active local dynamics eavesdropper $i$ by adding a periodic signal $\rb:\Z^{\ge0}\to\R^m$ to $\xb_i(t)$ at each time $t$ under which it can identify $\Eq^\dagger\subset\Eq^\ast_{[nm]}$ with $\eq\in\Eq^\dagger$, given by
	\begin{align*}\notag
	\Eq^\dagger&=\bigg\{\Ab\yb=\bb:\diag\bigg(\frac{\Ab_1\Ab_1^\top}{\Ab_1^\top\Ab_1},\dots,\frac{\Ab_n\Ab_n^\top}{\Ab_n^\top\Ab_n}\bigg)=(\Wb\otimes\Ib-\Tb^{-1}\Fb_\ast \Tb)/\alpha;\bb=\Ab\yb^\ast;\ \Tb\in\mathcal{T}\bigg\},\\
	\mathcal{T}&=\big\{\Tb\in\R^{nm\times nm}:\Tb\textnormal{ is invertible, } (\Eb_i\otimes\Ib_m)\Tb=\Cb_\ast \big\}.
	\end{align*}
	Here $\Fb_\ast \in\R^{nm\times nm}$ and $\Cb_\ast \in\R^{\left|\mN_i\right|m\times nm}$ depend on $\rb(t)$ and $\{\xb_j(t),j\in\mN_i\}_{t=0}^\infty$ only.

\end{theorem}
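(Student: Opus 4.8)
The plan is to recast the closed--loop CPA as an affine linear time--invariant (LTI) system driven by the eavesdropper's injected signal, and then to reduce the reconstruction of $\eq$ to a standard state--space identification problem whose only residual ambiguity is a coordinate change $\Tb$. Writing the orthogonal projection explicitly as $\PP_i(\ub)-\ub=-\frac{\Hb_i\Hb_i^\top}{\Hb_i^\top\Hb_i}\ub+\frac{\zb_i\Hb_i}{\Hb_i^\top\Hb_i}$ and stacking (\ref{eq:shi2017}) over all nodes, the unforced dynamics become $\xb(t+1)=\Fb\xb(t)+\mathbf{g}$ with $\Fb$ as in the theorem and $\mathbf{g}$ the constant vector collecting the $\alpha\zb_i\Hb_i/(\Hb_i^\top\Hb_i)$ terms. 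The active eavesdropper's periodic perturbation $\rb(t)$ enters this recursion through the $i$--th block, so from its vantage point the process is an LTI system $\xb(t+1)=\Fb\xb(t)+\mathbf{g}+(\eb_i\otimes\Ib_m)\rb(t)$ with \emph{known} input $\rb(t)$ and \emph{known} output $\yb(t)=(\Eb_i\otimes\Ib_m)\xb(t)$, the latter being exactly the neighbor states the eavesdropper observes.

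Next I would remove the constant drift $\mathbf{g}$ and invoke the identification machinery of \cite{mckelvey1995subspace}. Since $\mathbf{g}$ affects only the zero--frequency component of the response, restricting attention to the nonzero harmonics excited by the periodic $\rb(t)$ (equivalently, passing to first differences $\xb(t+1)-\xb(t)$, which eliminate $\mathbf{g}$) yields a purely linear input--output map. A periodic signal carrying sufficiently many distinct frequencies is persistently exciting, so the frequency--domain subspace method returns a state--space realization $(\Fb_\ast,B_\ast,\Cb_\ast)$ of order $nm$ reproducing the measured frequency response. Because the true realization $(\Fb,\eb_i\otimes\Ib_m,\Eb_i\otimes\Ib_m)$ generates the same transfer function and has the same order, the two are related by an invertible similarity transform $\Tb$; matching the known output map gives precisely $(\Eb_i\otimes\Ib_m)\Tb=\Cb_\ast$, i.e.\ $\Tb\in\mathcal{T}$, together with $\Fb=\Tb^{-1}\Fb_\ast\Tb$.

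It then remains to translate the recovered $\Fb$ back into the equation. Using the public $\Wb$ and $\alpha$, I would invert the structural identity $\Fb=\Wb\otimes\Ib-\alpha\,\diag(\cdots)$ to read off each diagonal block $\Hb_i\Hb_i^\top/(\Hb_i^\top\Hb_i)$; being a rank--one orthogonal projector, such a block fixes the direction of $\Hb_i$, hence $\Ab_i$ up to the scalar ambiguity of Definition~\ref{def:linear_equation_equivalence}. The right--hand side is then forced by $\bb=\Ab\yb^\ast$, which is legitimate because $\yb^\ast$ is assumed public and every equation sharing the solution $\yb^\ast$ must satisfy $\Ab_i^\top\yb^\ast=\bb_i$; note this replaces the limit $\Ab\lim_t\xb_i(t)$ used in Theorem~\ref{thm2}, which is unavailable here since the persistently excited state never settles. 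Letting $\Tb$ range over all of $\mathcal{T}$ produces exactly the set $\Eq^\dagger$, and the true equation corresponds to the genuine coordinate change, so $\eq\in\Eq^\dagger$.

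The main obstacle I anticipate is the identifiability step: guaranteeing that the periodic probe excites all $nm$ modes so that \cite{mckelvey1995subspace} returns a \emph{full}--order realization rather than a lower--dimensional minimal one. If some modes are neither reachable from the $i$--th input channel nor observable through the neighbor outputs, the recovered $\Fb_\ast$ would have order below $nm$ and $\Tb$ would fail to be an invertible $nm\times nm$ matrix, breaking the parametrization of $\mathcal{T}$. Making this precise requires verifying a controllability/observability condition for the structured pair $\big(\Fb,\,\eb_i\otimes\Ib_m,\,\Eb_i\otimes\Ib_m\big)$ and choosing the period and harmonic content of $\rb$ to meet the persistent--excitation order, after which the remainder of the argument is bookkeeping.
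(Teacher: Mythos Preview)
Your proposal follows the same overall route as the paper: cast the perturbed CPA as an LTI system with the injected signal as input and neighbor states as output, apply the periodic--excitation subspace method of \cite{mckelvey1995subspace} to obtain a realization $(\Fb_\ast,\Cb_\ast)$ similar to $(\Fb,\Eb_i\otimes\Ib_m)$ via some invertible $\Tb$, and then let $\Tb$ range over $\mathcal{T}$ to parametrize $\Eq^\dagger$. The one point where the paper is cleaner than your sketch is the removal of the constant drift $\mathbf{g}$. Rather than first--differencing or discarding the zero harmonic, the paper uses the assumed public $\yb^\ast$ directly: since $\1\otimes\yb^\ast$ is the equilibrium of the unforced CPA, the shift $\gammab(t)=\xb(t)-\1\otimes\yb^\ast$ annihilates $\mathbf{g}$ exactly, giving the pure linear system $\gammab(t+1)=\Fb\gammab(t)+\Bb\rb(t)$ with output $(\Eb_i\otimes\Ib_m)\gammab(t)$. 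Your first--difference trick also works but shifts the excitation requirement onto $\rb(t{+}1)-\rb(t)$. The paper additionally isolates a stability lemma for $\Fb$ (valid when $\eq$ has a unique solution and $\alpha\in(0,\lam(\Wb)+1)$), which is what guarantees convergence of the periodic--response sums in the McKelvey--Ak\c{c}ay procedure; you would need this as well. Your closing concern about the identified realization having full order $nm$ rather than a smaller minimal order is legitimate---the paper does not resolve it either and simply proceeds under that tacit assumption.
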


From Theorem \ref{thm2} and Theorem \ref{thm3}, a passiave or active eavesdropper eventually finds $\Ab_\ast\in\R^{nm\times nm}$ and $\Cb_\ast\in\R^{\left|\mN_i\right|m\times nm}$ such that $\tH=\diag\bigg(\frac{\Hb_1\Hb_1^\top}{\Hb_1^\top\Hb_1},\dots,\frac{\Hb_n\Hb_n^\top}{\Hb_n^\top\Hb_n}\bigg)$ satisfies
\begin{align}
(\Wb\otimes\Ib_m-\alpha\tH )\Tb &= \Tb\Ab_\ast,\notag\\
(\Eb_i\otimes\Ib_m)\Tb&=\Cb_\ast .\notag
\end{align}
Introduce a new matrix variable $\Qb = \tH\Tb$. Then solving $\Qb$ and $\Tb$ from the above equations will lead to the recovering of $\tH$, which further yilds $\mathcal{E}$ in the quotient space in view of the knowlege of $\mathbf{y}^\ast$.  Then 
 we can   establish by vectorization that 
\begin{align}\label{r100}
\begin{bmatrix}
   \mathbf{S}_\ast & -\alpha \Ib_{n^2m^2} \\[6pt]
  \Ib_{nm}\otimes(\Eb_i\otimes\Ib_m)     & 0 
\end{bmatrix}\begin{bmatrix}
    {\rm vec}(\Tb)
    \\[6pt]
   {\rm vec}(\Qb)
\end{bmatrix}= \begin{bmatrix}
    0
    \\[6pt]
 {\rm vec}(\Cb_\ast)
\end{bmatrix}
\end{align}
where $\mathbf{S}_\ast=\Ib_{nm}\otimes(\Wb\otimes\Ib_m)-\Ab_\ast\otimes \Ib_{nm}$. This linear equation about $$
({\rm vec}(\Tb)^\top,{\rm vec}(\Qb)^\top)^\top
$$ is underdetermined, and therefore  exploring the linear structure of the  obtained equations is not enough for the eavesdroppers to recover   $\mathcal{E}$.  However, the  $\Qb$ and $\Tb$ have additional constraints:  $\Tb$ must be non-singular, and $\Qb \Tb^{-1}$ has rank-one diagonal blocks. These highly nonlinear but strong constraint condtions might eventually locate  a unique solution from the solution set of (\ref{r100}).

\subsection{Numerical Example}

\noindent{\bf Example 4.} Let the CPA (\ref{eq:shi2017}) be implemented with the same setup in Example 1 on the graph in Figure \ref{fig:rec_global} for solving a linear equation $\eq:\Hb\yb=\zb$ with
\begin{equation}\notag
\Hb=\begin{bmatrix}
71.5 & -65.5\\
-95 & 47.1\\
-35.5 & 100\\
86.5 & -69
\end{bmatrix},\
\zb=\begin{bmatrix}
-202.5\\
189.2\\
235.5\\
-224.5
\end{bmatrix}.
\end{equation}
Suppose a local eavesdropper is at node $2$ and system identification methods in \cite{budin1971minimal,mckelvey1995subspace} have led to
{\begin{align*}
\Ab_\ast &=
\begin{bmatrix}
0.86 &	1.09 &	-0.87 &	-0.73 &	0.47 &	0.05 & 0.61 & -0.87\\
0.61 &	0.59 &	-0.47 &	-0.16 &	-0.36 &	-0.70 & -0.61 & 0.20\\
0.78 &	1.10 &	-0.94 &	-1.06 &	0.06 &	-0.65 &	0.15 & -0.75\\
1.03 &	0.64 &	-1.12 &	0.27 &	-0.28 &	-0.85 &	-0.68 &	0.09\\
-0.73 &	-1.37 &	1.72 &	1.18 &	0.30 &	0.38 &	-0.53 &	1.19\\
-1.22 &	-0.78 &	1.40 &	0.84 &	0.47 &	2.10 &	0.92 &	-0.01\\
2.03 &	1.60 &	-2.97 &	-1.82 &	-0.33 &	-1.94 &	0.03 &	-0.96\\
0.36 &	0.19 &	-0.35 &	-0.21 &	-0.12 &	-0.39 &	-0.30 &	0.80
\end{bmatrix},\\
\Cb_\ast  &=
\begin{bmatrix}
-60.68 &	83.44 &	6.16 &	-67.56 &	37.84 &	7.67 &	63.46 &	-63.63\\
-49.78 &	-42.83 &	55.83 &	58.86 &	49.63 &	99.23 &	73.74 &	-47.24\\
23.21 &	51.44 &	86.80 &	-37.76 &	-9.89 &	-84.36 &	-83.11 &	-70.89\\
-5.34 &	50.75 &	-74.02 &	5.71 &	-83.24 &	-11.46 &	-20.04 &	-72.79
\end{bmatrix}.
\end{align*}}

Let the eavesdropper  attempt to approximate  $\tH $ by the following optimization problem:
\begin{equation} \label{eq:local_optimization}
\begin{aligned}
\min_{\Hb\in\R^{n\times m},\Tb\in\R^{nm\times nm}}\qquad &  \|(\Wb\otimes\Ib_m-\alpha\tH )\Tb-\Tb\Ab_\ast\|^2_{\rm F}\\
{\rm s.t.} \qquad  &   (\Eb_i\otimes\Ib_m)\Tb=\Cb_\ast \\
& \tH = \diag\bigg(\frac{\Hb_1\Hb_1^\top}{\Hb_1^\top\Hb_1},\dots,\frac{\Hb_n\Hb_n^\top}{\Hb_n^\top\Hb_n}\bigg).
\end{aligned}
\end{equation}
It turns out that the interior-points method leads to iterations converging to local minima in general. However, when the initial estimate of the interior--point method is sufficiently close to the true $\Hb$, the interior--point method can find the value of $\Hb$.

\begin{figure}[H]
	\centering
	\includegraphics[width=9cm]{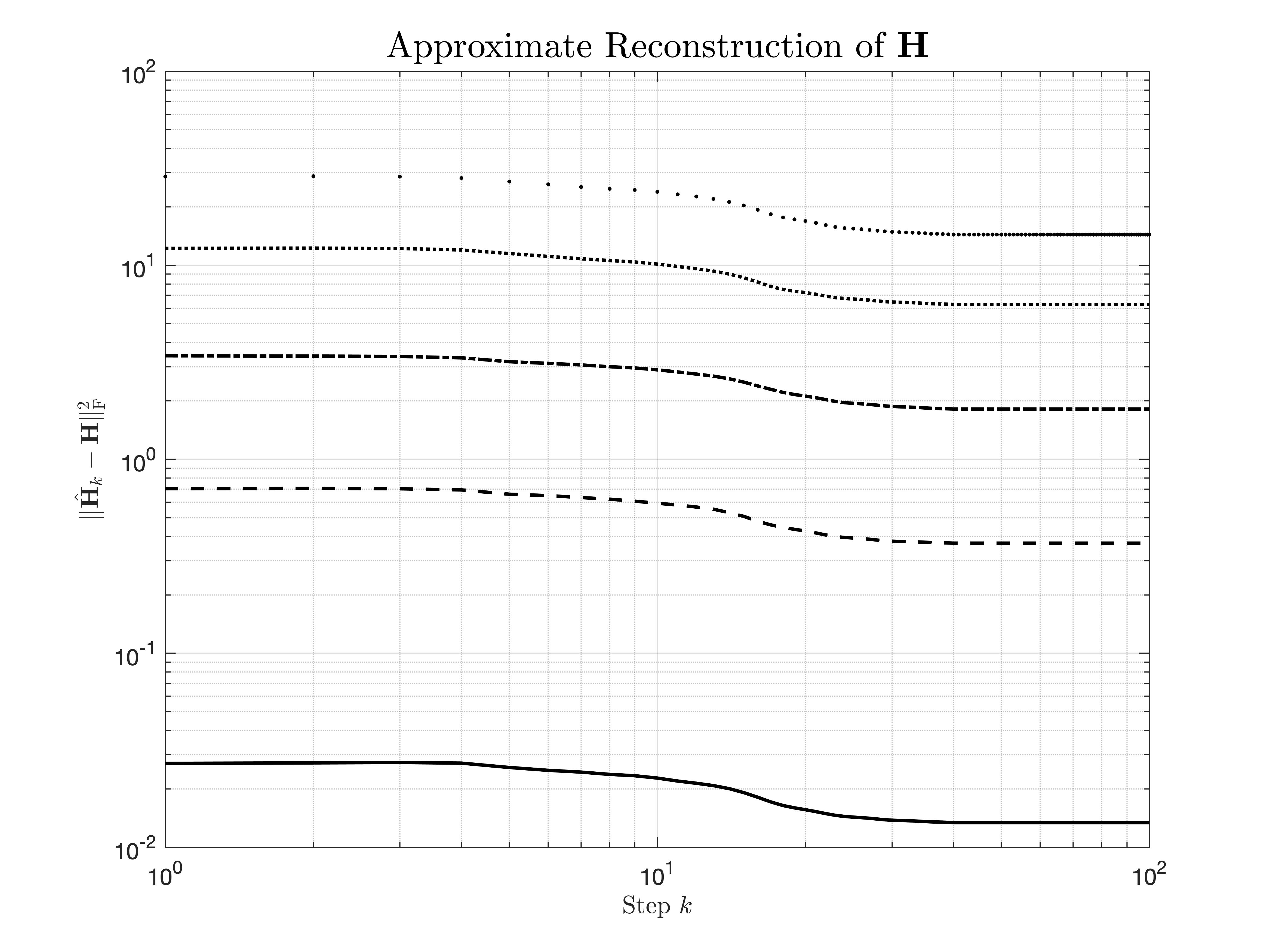}
	\caption{The log--log plot of $\|\bar{\Hb}_k-\Hb\|_{\rm F}^2$ along the interior--point method starting from different initial estimates $\bar{\Hb}_0$. Here $\bar{\Hb}_k$ is the $k$--th iteration of the interior--point method implemented over the problem (\ref{eq:local_optimization}).}
	\label{fig:rec_local}
\end{figure}

\section{Conclusions}\label{sec:conc}
We have shown that distributed computing inevitably led  to  privacy risks in terms the local datasets by both global and local eavesdroppers having access to the entirety of part of the node states. Such threats deserves attention as much of the dynamical states in distributed computing protocols represents physical states, which is  sensitive in terms of both privacy to individuals and vulnerability to eavesdroppers. After presenting a general frame work, we showed that for  network linear equations,   the computational privacy has indeed been completely lost to global eavesdroppers for almost all initial values following existing distributed linear equation solvers, and partially   to local passive or active eavesdroppers  who can monitor or alter one selected node. We proposed a Privacy-Preserving-Summation-Consistent (PPSC) mechanism, and showed that it can be used as  a generic privacy encryption step for  consensus-based distributed computations. The central idea of using the consensus manifold as a place to  hide node privacy while achieving  computation accuracy may be extended to non-consensus based distributed computing tasks in future work.


%

\medskip

\medskip

\medskip

\section*{Appendices}

\medskip

\subsection*{A. Proof of Proposition \ref{prop1}}
(i) Suppose there exist datasets $\mathpzc{D}^\prime\neq\mathpzc{D}$ that yield the same node state trajectory $\mathscr{M}_{\xb_0}(\mathpzc{D})=\mathscr{M}_{\xb_0}(\mathpzc{D}^\prime)$ under $\xb_0$. Due to the existence of $\mathpzc{D}^\prime$, $\mathpzc{D}$ cannot be uniquely determined from $\mathscr{M}_{\xb_0}(\mathpzc{D})$, implying that (\ref{eq:dcp}) is globally computationally private with respect to $\mathpzc{D}$ under $\xb_0$. On the other hand, we suppose that $\mathscr{M}_{\xb_0}(\mathpzc{D})=\mathscr{M}_{\xb_0}(\mathpzc{D}^\prime)$ implies $\mathpzc{D}=\mathpzc{D}^\prime$. As a result, $\mathscr{M}_{\xb_0}$ is injective and $\mathscr{M}_{\xb_0}^{-1}$ exists. Then from $\mathscr{M}_{\xb_0}(\mathpzc{D})$, a unique dataset $\mathpzc{D}$ always be uniquely determined by $\mathscr{M}_{\xb_0}^{-1}$. Therefore, (\ref{eq:dcp}) is not globally computationally private with respect to $\mathpzc{D}$ under $\xb_0$.

\noindent (ii) Along (\ref{eq:dcp}) with randomized update, the observation sequence $\mathscr{M}_{\xb_0}(\mathpzc{D})$ becomes an array of random variables, with its statistical properties fully characterized by its PDF. Define a mapping $\mathpzc{q}$ from the space of $\mathpzc{D}$ to the space of $\pdf\big(\mathscr{M}_{\xb_0}(\mathpzc{D})\big)$. Then $\mathpzc{q}(\mathpzc{D})=\mathpzc{q}(\mathpzc{D}^\prime)$ if and only if $\mathpzc{D}=\mathpzc{D}^\prime$. Thus the proof of (ii) is analogous to (i) and omitted here.
\subsection*{B. Proof of Theorem  \ref{thm1}}
We first provide the following lemma that assists with the proof.
\begin{lemma}\label{lem:reconstructability}
	Consider a linear equation $\eq\in\Eq^\ast_{[nm]}$ and an initial condition $\xb_0\in\R^{nm}$.
	
	(i) Along the CPA (\ref{eq:shi2017}),  $\eq$ is recoverable in the quotient space $\Eq^\ast_{[nm]}/\sim$ by a global dynamics eavesdropper under $\xb_0$  if and only if for each $i\in\mV$, either of the following conditions holds: (a) There exists $s_i\in\Z^{\ge0}$ such that $\PP_i\big(\xb_i(s_i)\big)\neq\xb_i(s_i)$; (b) There exist $r_1^i,\dots,r_m^i\in\Z^{\ge0}$ such that $\xb_i(r_1^i),\dots,\xb_i(r_m^i)$ are affinely independent.
	
	(ii) Along the  the PCA (\ref{eq:PCA}),	$\eq$ is recoverable in the quotient space $\Eq^\ast_{[nm]}/\sim$ by a global dynamics eavesdropper under $\xb_0$ if and only if for each $i\in\mV$, either of the following conditions holds: (a) There exists $s_i\in\Z^{\ge0}$ such that $\sum\limits_{j\in\mN_i}w_{ij}\PP_i\big(\xb_j(s_i)\big)\neq\sum\limits_{j\in\mN_i}w_{ij}\xb_j(s_i)$; (b) There exist $r_1^i,\dots,r_m^i\in\Z^{\ge0}$ such that $\sum\limits_{j\in\mN_i}w_{ij}\xb_j(r_1^i),\dots,\sum\limits_{j\in\mN_i}w_{ij}\xb_j(r_m^i)$ are affinely independent.
\end{lemma}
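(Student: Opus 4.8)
The plan is to reduce recoverability in the quotient space to the separate recovery of each local solution hyperplane, and then to show that the observed trajectory pins that hyperplane down exactly when (a) or (b) holds. By Definition \ref{def:linear_equation_equivalence}(ii), two equations in $\Eq^\ast_{[nm]}$ are equivalent precisely when their $i$-th rows cut out the same affine hyperplane $\mathcal H_i=\{\yb\in\R^m:\Hb_i^\top\yb=\zb_i\}$ for every $i$; hence recovering $\eq$ in $\Eq^\ast_{[nm]}/\sim$ is the same as recovering the tuple $(\mathcal H_1,\dots,\mathcal H_n)$. For the CPA I would first record that a global eavesdropper knowing $\Wb$ and $\alpha$ can form from \eqref{eq:shi2017}, for every $i$ and $t$, the projection residual
\[
\mathbf g_i(t):=\PP_i\big(\xb_i(t)\big)-\xb_i(t)=\frac{1}{\alpha}\Big(\xb_i(t+1)-\sum_{j\in\mN_i\cup\{i\}}w_{ij}\xb_j(t)\Big),
\]
an explicitly computable scalar multiple of the normal $\Hb_i$, together with the on-hyperplane point $\PP_i(\xb_i(t))=\xb_i(t)+\mathbf g_i(t)\in\mathcal H_i$.

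\emph{Sufficiency of (a) or (b).} Fix $i$. If (a) holds, then $\mathbf g_i(s_i)\neq 0$ reveals the normal direction of $\mathcal H_i$, and together with the point $\PP_i(\xb_i(s_i))\in\mathcal H_i$ this determines $\mathcal H_i$, hence its $\sim$-class. If (a) fails, every $\xb_i(t)$ already lies on $\mathcal H_i$; under (b) the $m$ affinely independent points $\xb_i(r_1^i),\dots,\xb_i(r_m^i)$ lie on the $(m-1)$-dimensional affine set $\mathcal H_i$, and since $m$ affinely independent points determine a unique $(m-1)$-dimensional affine subspace, their affine hull must equal $\mathcal H_i$. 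Carrying this out for every $i$ recovers $\eq$ in $\Eq^\ast_{[nm]}/\sim$.

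\emph{Necessity.} I would argue contrapositively: if both (a) and (b) fail for some $i$, then all $\xb_i(t)$ lie on $\mathcal H_i$ and their affine hull $A_i$ has dimension at most $m-2$, i.e.\ codimension at least two. I would then construct $\eq'\not\sim\eq$ giving the identical trajectory by keeping $\mathcal H_j'=\mathcal H_j$ for $j\neq i$ and replacing $\mathcal H_i$ by any hyperplane $\mathcal H_i'\neq\mathcal H_i$ containing $A_i$ (such exist because $A_i$ has codimension $\geq 2$). Since $\xb_i(t)\in A_i\subseteq\mathcal H_i\cap\mathcal H_i'$, an induction on $t$ using $\PP_i'(\xb_i(t))=\xb_i(t)=\PP_i(\xb_i(t))$ shows the two CPA trajectories coincide, while $\mathcal H_i'\neq\mathcal H_i$ forces $\eq'\not\sim\eq$. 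The hard part will be keeping $\eq'$ inside $\Eq^\ast_{[nm]}$: Definition \ref{def:global_eavesdroppers} tests non-recoverability within the solvable class, so I must route $\mathcal H_i'$ through $A_i$ while still meeting $\bigcap_{j\neq i}\mathcal H_j$. Because the pencil of hyperplanes through the codimension-$\geq 2$ flat $A_i$ sweeps out $\R^m$, this is possible whenever $\bigcap_{j\neq i}\mathcal H_j\not\subseteq\mathcal H_i$; the degenerate configuration in which the $i$-th equation is already implied by the others requires separate care (and disappears entirely if $\eq'$ is only required to be a valid, not necessarily solvable, equation).

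Finally, part (ii) for the PCA is parallel. Writing the stacked update as $\xb(t+1)=(\Wb\otimes\Ib_m)\,\Pi(\xb(t))$ with $\Pi(\xb(t))=\big(\PP_1(\xb_1(t))^\top,\dots,\PP_n(\xb_n(t))^\top\big)^\top$, the eavesdropper reconstructs from consecutive observed states the aggregated, projection-induced quantities that enter conditions (a)--(b); these are phrased so as to be computable directly from \eqref{eq:PCA} without inverting $\Wb\otimes\Ib_m$. The same dichotomy then applies---condition (a) exposes a nonzero residual along the relevant normal direction, condition (b) supplies enough affinely independent observed vectors to span the hyperplane---and necessity again perturbs a single hyperplane through the low-dimensional affine hull while preserving the trajectory. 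The only extra bookkeeping relative to the CPA is that node $i$'s projection now enters several neighbors' updates, so one tracks these aggregated weighted sums rather than the isolated residual $\mathbf g_i(t)$.
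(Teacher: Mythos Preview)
Your proposal is correct and follows essentially the same route as the paper. For sufficiency, both you and the paper extract the projection residual $\PP_i(\xb_i(t))-\xb_i(t)$ from the CPA recursion and use it (when nonzero) to read off the normal direction and a point on $\mathcal H_i$; when the residual vanishes identically, both use the affine independence hypothesis to span the hyperplane. The one stylistic difference is in necessity: the paper argues by dimension counting (the constraints $\Hb_{i^\ast}^\top\big(\xb_{i^\ast}(t)-\xb_{i^\ast}(s)\big)=0$ leave a solution space of dimension $>1$), whereas you explicitly construct an alternative hyperplane $\mathcal H_i'\supseteq A_i$ and verify inductively that the trajectories coincide; both are valid and equivalent in spirit. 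The solvability caveat you flag---that $\eq'$ must remain in $\Eq^\ast_{[nm]}$---is a genuine technical point the paper's proof does not explicitly treat either, so you are not missing anything the paper supplies. Your treatment of the PCA as the same argument applied to the weighted averages $\sum_j w_{ij}\xb_j(t)$ matches the paper's one-line reduction as well.
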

\begin{proof}
	(i) \noindent({\em Necessity.}) Suppose there exists $\iast\in\mV$ such that neither (a) nor (b) holds. Then $\PP_{\iast}\big(\xb_{\iast}(t)\big)=\xb_{\iast}(t)$ for all $t\in\Z^{\ge0}$.
	%
	As a result, the dynamics (\ref{eq:shi2017}) becomes
	\begin{equation}\notag
	\begin{aligned}
	\xb_{\iast}(t+1) &= \sum\limits_{j\in\mN_{\iast}} w_{ij}\xb_j(t);\\
	\xb_i(t+1) &= \sum\limits_{j\in\mN_i}w_{ij}\xb_j(t)+\alpha\bigg(\PP_i\big(\xb_i(t)\big)-\xb_i(t)\bigg),\ i\neq\iast.
	\end{aligned}
	\end{equation}
	This means that the trajectory of $\xb(t)$ contains no information about $\eq_{\iast}$, except for the knowledge $\PP_{\iast}\big(\xb_{\iast}(t)\big)=\xb_{\iast}(t)$. By the definition of $\PP_i$, the eavesdropper can only attempt to recover $\eq_{\iast}$ from
	\begin{equation}\label{eq:h_iast_set}
	\Hb_{\iast}^\top\xb_{\iast}(t)=\zb_{\iast},\ t\in\Z^{\ge0}.
	\end{equation}
	Since the eavesdropper does not know $\zb_{\iast}$, from the perspective of the eavesdropper who does not know $\zb_{\iast}$, the infinite equation set (\ref{eq:h_iast_set}) is equivalent to
	\begin{equation}\label{eq:h_iast_diff_set}
	\Hb_{\iast}^\top\xb_{\iast}(0)=\Hb_{\iast}^\top\xb_{\iast}(1)=\cdots.
	\end{equation}
	It is clear that (\ref{eq:h_iast_diff_set}) is equivalent to the constraint that $\Hb_{\iast}$ is perpendicular to $\xb_{\iast}(t)-\xb_{\iast}(s)$ for all $t,s\in\Z^{\ge0}$.
	When (b) does not hold, the vectors $\xb_{\iast}(t)$ for all $t\in\Z^{\ge0}$ are affinely dependent, which yields \cite{gallier2011basics} $\dim\big\{\xb_{\iast}(t)-\xb_{\iast}(s): \forall t,s\in\Z^{\ge0}\big\}<m-1$. This implies the solution space of $\Hb_{\iast}$ satisfying (\ref{eq:h_iast_diff_set}) has dimension larger than one. Therefore, $\eq_{\iast}$, and there by $\eq$, is not recoverable. This implies that the CPA (\ref{eq:shi2017}) is computationally private.

	\noindent\big({\em Sufficiency of Condition (a)}\big)
	Suppose the condition (a) holds for any $i\in\mV$. Then according to (\ref{eq:shi2017}), one has
	\begin{equation}\label{eq:global_observability_privacy_condition}
	\xb_i(s_i+1)\neq\sum\limits_{j\in\mN_i}w_{ij}\xb_j(s_i).
	\end{equation}
	Direct calculation shows $\PP_i(\xb)=\bigg(\Ib_m-\frac{\Hb_i\Hb_i^\top}{\Hb_i^\top\Hb_i}\bigg)\xb+\frac{\zb_i\Hb_i}{\Hb_i^\top\Hb_i}$. Based on the dynamics (\ref{eq:shi2017}), we have
	\begin{equation}\label{eq:CPA_proof1}
	\frac{\alpha\Hb_i}{\Hb_i^\top\Hb_i}\big(\zb_i-\Hb_i^\top\xb_i(t)\big)=\xb_i(t+1)-\sum\limits_{j\in\mN_i}w_{ij}\xb_j(t).
	\end{equation}
	Clearly, (\ref{eq:global_observability_privacy_condition}) guarantees that both sides of (\ref{eq:CPA_proof1}) are nonzero when $t=s_i$, which further implies that $\Hb_i$ and $\xb_i(t+1)-\sum\limits_{j\in\mN_i}w_{ij}\xb_j(t)$ are parallel nonzero vectors. Defining
	\begin{equation}\label{eq:def_of_H_hat}
	\hat{\Hb}_i=\frac{\xb_i(s_i+1)-\sum\limits_{j\in\mN_i}w_{ij}\xb_j(s_i)}{\bigg\|\xb_i(s_i+1)-\sum\limits_{j\in\mN_i}w_{ij}\xb_j(s_i)\bigg\|},
	\end{equation}
	there exists $\lambda\neq0$ such that
	\begin{equation}\label{eq:CPA_proof2}
	\Hb_i=\lambda\hat{\Hb}_i.
	\end{equation}
	By replacing $\xb_i(t+1)-\sum\limits_{j\in\mN_i}w_{ij}\xb_j(t)$ and $\Hb_i$ in (\ref{eq:CPA_proof1}) based on the relations given in (\ref{eq:def_of_H_hat}) and (\ref{eq:CPA_proof2}), one has
	\begin{equation}\label{eq:scalar_vector}
	\alpha\hat{\Hb}_i\cdot\bigg(\frac{\zb_i}{\lambda}-\hat{\Hb}_i^\top\xb_i(s_i)\bigg)=\hat{\Hb}_i\cdot\bigg\|\xb_i(s_i+1)-\sum\limits_{j\in\mN_i}w_{ij}\xb_j(s_i)\bigg\|.
	\end{equation}
	It can be observed that both sides of (\ref{eq:scalar_vector}) are a product of a scalar and the nonzero vector $\hat{\Hb}_i$. Thus one can obtain the following equation by letting the scalars at the left-- and right--hand side be equal.
	\begin{equation}\label{eq:CPA_proof3}
	\alpha\bigg(\frac{\zb_i}{\lambda}-\hat{\Hb}_i^\top\xb_i(s_i)\bigg)=\bigg\|\xb_i(s_i+1)-\sum\limits_{j\in\mN_i}w_{ij}\xb_j(s_i)\bigg\|.
	\end{equation}
	By letting $\lambda=1$ in (\ref{eq:CPA_proof3}), one can immediately recover the following linear equation.
	\begin{equation}\label{eq:linear_equation_rec_results}
	\begin{aligned}
	&\hat{\eq}_i:\quad\bigg(\xb_i(s_i+1)-\sum\limits_{j\in\mN_i}w_{ij}\xb_j(s_i)\bigg)^\top\yb=\bigg(\xb_i(s_i+1)-\sum\limits_{j\in\mN_i}w_{ij}\xb_j(s_i)\bigg)^\top\xb_i(s_i)\\
	&+\frac{1}{\alpha}\bigg\|\xb_i(s_i+1)-\sum\limits_{j\in\mN_i}w_{ij}\xb_j(s_i)\bigg\|^2.
	\end{aligned}
	\end{equation}
	It can be easily verified $\hat{\eq}_i\sim\eq_i$.
	
	\noindent\big({\em Sufficiency of Condition (b)}\big)
	Suppose the condition (b) holds for any $i\in\mV$. We have shown above the linear equation $\eq$ is recoverable if the condition (a) holds, and thus in this case we suppose the condition (a) does not hold, namely $\PP_i\big(\xb_i(t)\big)=\xb_i(t)$ for all $t$. By the definition of $\PP_i$, we know
	\begin{equation}\label{eq:h_i_x_i}
	\Hb_i^\top\xb_i(r_k^i)=\zb_i,\ k=1,\dots,m.
	\end{equation}
	Further, it follows from (\ref{eq:h_i_x_i})
	\begin{equation}\notag
	\Hb_i^\top\big(\xb_i(r_{k+1}^i)-\xb_i(r_k^i)\big)=0,\ k=1,\dots,m-1,
	\end{equation}
	which can be written into a compact form $
	\Ub\Hb_i=0,
	$
	where $\Ub$ is an $m-1$--by--$m$ matrix with each row being $\Ub_k=\xb_i(r_{k+1}^i)-\xb_i(r_k^i),\ k=1,\dots,m-1$. According to \cite{gallier2011basics},
	$\rank(\Ub)=m-1$ and thereby the dimension of the kernel of $\Ub$ is one. We can pick any $\Hb_i^\prime\neq0$ in the kernel of $\Ub$ and compute $\zb_i^\prime=\Hb_i^{\prime\top}\xb_i(r_k^i)$ with any $k$. Finally, $\eq_i^\prime:\Hb_i^{\prime\top}\yb=\zb_i^\prime$ that is equivalent to $\eq_i$ can be recovered.
	
	We have shown above that if either the condition (a) or the condition (b) holds for any $i$, then $\eq_i^\prime\sim\eq_i$ is recoverable.
	
	\noindent (ii) For all $i\in\mV$, there holds according to (\ref{eq:PCA})
	\begin{equation}\label{eq:PCA_proof1}
	\frac{\Hb_i}{\Hb_i^\top\Hb_i}\big(\zb_i-\Hb_i^\top\sum\limits_{j\in\mN_i}w_{ij}\xb_j(t_i)\big)=\xb_i(t_i+1)-\sum\limits_{j\in\mN_i}w_{ij}\xb_j(t_i).
	\end{equation}
	It is worth noting that (\ref{eq:PCA_proof1}) along the PCA (\ref{eq:PCA}) is similar to (\ref{eq:CPA_proof1}) along the CPA (\ref{eq:shi2017}). By the same arguments as the proof for the CPA (\ref{eq:shi2017}), we can establish the desired statement as well.
\end{proof}
\noindent (i) Based on Lemma \ref{lem:reconstructability}, we know if $\eq$ is recoverable under $\xb(0)$ along the CPA (\ref{eq:shi2017}), then there exists $\iast\in\mV$ such that neither the condition (a) nor the condition (b) holds.
Correspondingly, we define sets
\begin{align*}
\mathrm{S}_t&=\big\{\xb(0)\in\R^{nm}:\PP_i\big(\xb_{\iast}(t)\big)=\xb_{\iast}(t),\textnormal{ where }\xb_{\iast}(t)\textnormal{ evolves according to }(\ref{eq:shi2017})\big\},\ t\in\Z^{\ge0}\\
\mathrm{T}&=\big\{\xb(0)\in\R^{nm}:\xb_{\iast}(t),t\in\Z^{\ge0}\textnormal{ are affinely dependent,}\textnormal{ where }\xb_{\iast}(t)\textnormal{ evolves according to }(\ref{eq:shi2017})\big\}.
\end{align*}
According to the claim above, $\xb(0)\in\bigcap_{t=0}^\infty\mathrm{S}_t\mcap\mathrm{T}$. Clearly, $\xb(0)\in\mathrm{S}_0$ says
$
\frac{\Hb_{\iast}}{\Hb_{\iast}^\top\Hb_{\iast}}\big(\zb_{\iast}-\Hb_{\iast}^\top\xb_{\iast}(0)\big)=0.
$
Recall that $\eq_i$ for all $i$ are assumed to be nontrivial. Then the subspace $\mathrm{S}_0$ has dimension $nm-1$, and thus has measure zero in $\R^{nm}$. Since $\bigcap_{t=0}^\infty\mathrm{S}_t\mcap\mathrm{T}$ is a subset of $\mathrm{S}_0$, it also has measure zero. This finishes proving that the CPA (\ref{eq:shi2017}) is not computationally private under almost all initial conditions.

\medskip

\noindent (ii) Analogously, if $\eq$ is recoverable under $\xb(0)$ along the PCA (\ref{eq:PCA}), then based on Lemma \ref{lem:reconstructability}, there necessarily exists $\iast\in\mV$ such that
$$\frac{\Hb_{\iast}^\top}{\Hb_{\iast}^\top\Hb_{\iast}}\sum\limits_{j\in\mN_i}w_{\iast j}\xb_j(0)=\zb_{\iast},$$
which restricts the dimension of the subspace of $\xb(0)$ to $nm-1$. This leads to the same conclusion as the CPA (\ref{eq:shi2017}).

\subsection*{C. Proof of Theorem \ref{thm:sensitivity_dp_CPA}}

The DP--DLES can be written as
\begin{equation}\label{eq:differential_dynamics_CPA}
\begin{aligned}
\xbf_i(t) &= \PPo\big(\xb_i(t)\big)\\
\xbs_i(t) &= \xbf(t)+\omegab_i(t)\\
\xb_i(t+1) &= \sum\limits_{j\in\mN_i}\Wb_{ij}\xbs_j(t)+\alpha(t)\big(\PP_i\big(\xbf_i(t)\big)-\xbf_i(t)\big).
\end{aligned}
\end{equation}
Define $\PPo^\ast\big(\xb(t)\big)=[\PPo\big(\xb_1(t)\big)^\top\ \dots\ \PPo\big(\xb_n(t)\big)^\top]^\top$ and $\omegab(t)=[\omegab_1(t)^\top\ \dots\ \omegab_n(t)^\top]^\top$. By removing the intermediate states $\xbf_i(t)$ and $\xbs_i(t)$, we can rewrite (\ref{eq:differential_dynamics_CPA}) compactly as
\begin{equation}\label{eq:differential_dynamics_compact_CPA}
\xb(t+1) = \big(\Wb\otimes\Ib_m\big)\big(\xb(t)+\omegab(t)\big)+\alpha(t)\big(\tz -\tH \PPo^\ast\big(\xb(t)\big)\big),
\end{equation}
where $\tH=\diag\bigg(\frac{\Hb_1\Hb_1^\top}{\Hb_1^\top\Hb_1},\dots,\frac{\Hb_n\Hb_n^\top}{\Hb_n^\top\Hb_n}\bigg),\tz=\begin{bmatrix}
\frac{\zb_1\Hb_1}{\Hb_1^\top\Hb_1}& \dots & \frac{\zb_n\Hb_n}{\Hb_n^\top\Hb_n}\end{bmatrix}^\top$. We now associate each iteration of DP--DLES at time $t\in\Z^{\ge0}$ with a mechanism $\mM_t:\Eq^\ast_{[nm]}\times\R^{nm}\to\R^{nm}$ satisfying $\mM_t\big(\eq,\xb(t)\big)=\xb(t+1)$. Consider two adjacent linear equations $\eq:\Hb\yb=\zb,\eq^\prime=\Hb^\prime\yb=\zb^\prime\in\Eq^\ast_{[nm]}$. We define $\tH ^\prime,\tz ^\prime$ for $\eq^\prime$ similarly as $\tH ,\tz $ for $\eq$. Then for $\eq,\eq^\prime$, there holds
\begin{align}
&\quad\frac{\Pr\big(\mM_t(\eq,\xb(t))=\xb(t+1)\big)}{\Pr\big(\mM_t(\eq^\prime,\xb(t))=\xb(t+1)\big)}\notag\\
&\overset{\rm a)}{=}\frac{{\pdf\bigg((\Wb\otimes\Ib_m)^{-1}\bigg(\xb(t+1)}{-\alpha(t)\big(\tz -\tH \PPo^\ast\big(\xb(t)\big)\big)\bigg)-\xb(t)\bigg)}}{{\pdf\bigg((\Wb\otimes\Ib_m)^{-1}\big(\xb(t+1)}{-\alpha(t)\big(\tz ^\prime-\tH ^\prime\PPo^\ast\big(\xb(t)\big)\big)\bigg)-\xb(t)\bigg)}}\notag\\
&\le\exp\bigg(\frac{\alpha(t)}{c\phi^t}\big\|(\Wb\otimes\Ib_m)^{-1}\big((\tH -\tH ^\prime)\PPo^\ast\big(\xb(t)\big)-(\tz -\tz ^\prime)\big)\big\|_1\bigg)\notag\\
&\le\exp\bigg(\frac{\alpha(t)}{c\phi^t}\big\|\Wb^{-1}\otimes\Ib_m\big\|_1\big\|(\tH -\tH ^\prime)\PPo^\ast\big(\xb(t)\big)-(\tz -\tz ^\prime)\big\|_1\bigg)\notag\\
&=\exp\bigg(\frac{\alpha(t)}{c\phi^t}\big\|\Wb^{-1}\big\|_1\sum\limits_{i=1}^n\bigg\|\bigg(\frac{\Hb_i\Hb_i^\top}{\Hb_i^\top\Hb_i}-\frac{\Hb_i^\prime\Hb_i^{\prime\top}}{\Hb_i^{\prime\top}\Hb_i^\prime}\bigg)\PPo\big(\xb_i(t)\big)-\bigg(\frac{\zb_i\Hb_i}{\Hb_i^\top\Hb_i}-\frac{\zb_i^\prime\Hb_i^\prime}{\Hb_i^{\prime\top}\Hb_i^\prime}\bigg)\bigg\|_1\bigg)\notag\\
&\overset{\rm b)}{\le}\exp\bigg(\frac{\alpha(t)\sqrt{nm}}{c\phi^t}\big\|\Wb^{-1}\big\|\sum\limits_{i=1}^n\bigg(\bigg\|\frac{\Hb_i\Hb_i^\top}{\Hb_i^\top\Hb_i}-\frac{\Hb_i^\prime\Hb_i^{\prime\top}}{\Hb_i^{\prime\top}\Hb_i^\prime}\bigg\|\big\|\PPo\big(\xb_i(t)\big)\big\|+\bigg\|\frac{\zb_i\Hb_i}{\Hb_i^\top\Hb_i}-\frac{\zb_i^\prime\Hb_i^\prime}{\Hb_i^{\prime\top}\Hb_i^\prime}\bigg\|\bigg)\bigg)\notag\\
&\overset{\rm c}{\le}\exp\bigg(\frac{\lambda\psi^t\sqrt{nm}}{c\phi^t\sigm(\Wb)}\big(B\dH+\dz\big)\bigg),\label{eq:mM_t_inequality}
\end{align}
where a) uses (\ref{eq:differential_dynamics_compact_CPA}) and the PDF of Laplace random variables, b) is obtained from norm  inequalities, and c) comes from the linear--equation adjacency. By omitting the notation $\eq$, the following composition relation holds $\mM=(\mM_0,\mM_1\circ\mM_0,\dots,\dots\circ\mM_1\circ\mM_0)$. To let $\mM$ be $\big(\epsilon,\dH,\dz\big)$--differentially private, based on (\ref{eq:mM_t_inequality}), the following must hold from the composition property of differential privacy (see, e.g., Theorem 1 of \cite{mcsherry2009privacy}):
\begin{equation}\notag
\sum\limits_{t=0}^\infty\frac{\lambda\psi^t\sqrt{nm}}{c\phi^t\sigm(\Wb)}\big(B\dH+\dz\big) \le \epsilon,
\end{equation}
which further implies the desired result.


%

\subsection*{D. Proof of Theorem  \ref{thm2}}
We divide the proof into three steps.

\noindent Step 1. The CPA (\ref{eq:shi2017}) can be compactly rewritten as
\begin{equation}\label{eq:compact1}
\xb(t+1) = \Fb\xb(t)+\alpha\tz,
\end{equation}
where $\Fb=\Wb\otimes\Ib_m-\alpha\diag\bigg(\frac{\Hb_1\Hb_1^\top}{\Hb_1^\top\Hb_1},\dots,\frac{\Hb_n\Hb_n^\top}{\Hb_n^\top\Hb_n}\bigg)$ and $\tz ^\top=\begin{bmatrix}
\frac{\zb_1\Hb_1^\top}{\Hb_1^\top\Hb_1} & \dots & \frac{\zb_n\Hb_n^\top}{\Hb_n^\top\Hb_n}
\end{bmatrix}$.
Any passive local dynamics eavesdropper $i\in\mV$ knows a solution $\yb^\ast$ to the linear equation $\eq$ as a result of implementing the CPA (\ref{eq:shi2017}). By introducing $\gammab(t)=\xb(t)-\1\otimes\yb^\ast$, the eavesdropper can rewrite (\ref{eq:compact1}) into
\begin{align}
\gammab(t+1) = \Fb\gammab(t).\label{eq:compact1_rewrite}
\end{align}
Since the eavesdropper $i$ has access to only the state observations of the nodes in $\mN_i$, its observation can be described by
\begin{equation}\notag
\vb(t) = (\Eb_i\otimes\Ib_m)\gammab(t)+\1\otimes\yb^\ast.
\end{equation}
Alternatively, knowing $\yb^\ast$, this passive eavesdropper has access to
\begin{equation}\label{eq:compact_output}
\yb(t)=\vb(t)-\1\otimes\yb^\ast=(\Eb_i\otimes\Ib_m)\gammab(t).
\end{equation}
\noindent Step 2. We apply the system identification approach  from \cite{budin1971minimal} to the system described by (\ref{eq:compact1_rewrite})--(\ref{eq:compact_output}) with state $\gammab(t)$ and output $\yb(t)$. According to \cite{budin1971minimal}, if $(\Fb,\Eb\otimes\Ib_m)$ is completely observable, i.e., the matrix
\begin{align}
\mO=\begin{bmatrix}
\Eb_i\otimes\Ib_m\\
(\Eb_i\otimes\Ib_m)\Fb\\
\vdots\\
(\Eb_i\otimes\Ib_m)\Fb^{nm-\left|\mN_i\right|m}
\end{bmatrix}\notag
\end{align}
has full column rank, and there exist $t_1<\dots<t_{nm}$ such that the vectors $\xb(t_k),\ k=1,\dots,nm$ are linearly independent, then the passive local dynamics eavesdropper $i$ can identify $(\Fb_\ast ,\Cb_\ast )\in\R^{nm\times nm}\times\R^{\left|\mN_i\right|m\times nm}$ from $\big(\yb(t)\big)_{t=0}^\infty$ by the following procedure.
\begin{enumerate}[(i)]
	\item Collect a number of consecutive outputs of the system (\ref{eq:compact1_rewrite})--(\ref{eq:compact_output}) $\big(\yb(t)\big)_{t=t_k}^{t_k+nm-\left|\mN_i\right|m}$ for $k=1,\dots,nm$ and construct
	\begin{align*}
	\uY&=\begin{bmatrix}
	\yb(t_1) & \cdots & \yb(t_{nm})\\
	\vdots & \ddots & \vdots\\
	\yb(t_1+nm-\left|\mN_i\right|m-1) & \cdots & \yb(t_{nm}+nm-\left|\mN_i\right|m-1)
	\end{bmatrix},\\
	\oY&= \begin{bmatrix}
	\yb(t_1+1) & \cdots & \yb(t_{nm}+1)\\
	\vdots & \ddots & \vdots\\
	\yb(t_1+nm-\left|\mN_i\right|m) & \cdots & \yb(t_{nm}+nm-\left|\mN_i\right|m)
	\end{bmatrix}.
	\end{align*}
	
	Since $\uY = \mO[
	\xb(t_1)\ \cdots\ \xb(t_{nm})]$ and $\mO$ has full column rank, $\rank(Y)=\rank
	[\xb(t_1)\ \cdots\ \xb(t_{nm})]=nm$.
	\item Write the state--output relation for the collected output data as
	\begin{equation}\label{eq:state-output}
	\begin{aligned}
	\oY &= \mO(\Wb\otimes\Ib_m-\alpha\tH )\begin{bmatrix}
	\xb(t_1) & \cdots & \xb(t_{nm})
	\end{bmatrix},\\
	\uY &= \mO\begin{bmatrix}
	\xb(t_1) & \cdots & \xb(t_{nm})
	\end{bmatrix}.
	\end{aligned}
	\end{equation}
	\item Choose a row--selecting matrix $\Sb:\R^{nm\times \left|\mN_i\right|(nm-\left|\mN_i\right|m)}$ such that $\Sb\uY$ is invertible, and further, based on (\ref{eq:state-output}), obtains $\Fb_\ast :=\Sb\mO(\Wb\otimes\Ib_m-\alpha\tH )(\Sb\mO)^{-1}=\Sb\oY(\Sb\uY)^{-1}$ and then $\Cb_\ast =[\Ib_{m\left|\mN_i\right|}\ 0]$.
\end{enumerate}
The obtained pair $(\Ab_\ast,\Cb_\ast )$ is different from $(\Fb,\Eb_i\otimes\Ib_m)$ subject to a coordinate change based on Proposition 2.3 of \cite{de2000minimal}:
\begin{align}
\Tb^{-1}\Fb\Tb = \Fb_\ast ,\ (\Eb_i\otimes\Ib_m)\Tb=\Cb_\ast \label{eq:realization_A_C}
\end{align}
with an invertible $\Tb\in\R^{nm\times nm}$.

\noindent Step 3. We now come back to the point of view of the eavesdropper. Since the local dynamics eavesdropper $i$ knows $\Eb_i$, it can determine the set $\mathcal{T}$ based on (\ref{eq:realization_A_C}). By (\ref{eq:realization_A_C}), there exists some $\Tb\in\mathcal{T}$ such that $\diag\bigg(\frac{\Hb_1\Hb_1^\top}{\Hb_1^\top\Hb_1},\dots,\frac{\Hb_n\Hb_n^\top}{\Hb_n^\top\Hb_n}\bigg)=(\Wb\otimes\Ib-\Tb^{-1}\Fb_\ast \Tb)/\alpha$. It is worth noting that the eavesdropper can compute $\zb=\Hb\lim\limits_{t\to\infty}\xb_i(t)$.	This completes the proof.
\subsection*{E. Proof of Theorem  \ref{thm3}}

We first  establish a lemma that assists with the presentation of the privacy of CPA (\ref{eq:shi2017}) against the active local eavesdropper $i$.
\begin{lemma}\label{lem:stability}
	The eigenvalues of $\Wb\otimes\Ib_m-\alpha\tH$ for $\alpha\in\big(0,\lam(\Wb)+1\big)$ are all strictly less than one in absolute value if $\eq\in\Eq_{[nm]}^\ast$ has a unique solution.
\end{lemma}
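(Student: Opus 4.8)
The plan is to exploit that $\Wb\otimes\Ib_m-\alpha\tH$ is a \emph{symmetric} matrix (both summands being symmetric), so that its eigenvalues are real and can be controlled entirely through Rayleigh quotients and Weyl's inequality; it then suffices to bound them above by $1$ and below by $-1$. First I would record the two structural facts that drive everything. On one hand, each diagonal block $\Hb_i\Hb_i^\top/(\Hb_i^\top\Hb_i)$ of $\tH$ is a rank-one orthogonal projection, so $\tH$ is symmetric with spectrum in $\{0,1\}$; equivalently $0\preceq\tH\preceq\Ib_{nm}$ and $\laM(\tH)=1$. On the other hand, since $\Wb$ is symmetric doubly stochastic and $\mathrm{G}$ is connected with $w_{ii}>0$, the eigenvalue $1$ of $\Wb$ is simple (Perron) with eigenvector $\1_n$, while its smallest eigenvalue satisfies $\lam(\Wb)>-1$; consequently $\Wb\otimes\Ib_m$ has largest eigenvalue $1$ with eigenspace exactly $\{\1_n\otimes u:u\in\R^m\}$ and smallest eigenvalue $\lam(\Wb)$.

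The lower bound is then immediate. By Weyl's inequality applied to $(\Wb\otimes\Ib_m)+(-\alpha\tH)$,
\[
\lam\big(\Wb\otimes\Ib_m-\alpha\tH\big)\ \ge\ \lam(\Wb\otimes\Ib_m)+\lam(-\alpha\tH)\ =\ \lam(\Wb)-\alpha ,
\]
using $\lam(-\alpha\tH)=-\alpha\,\laM(\tH)=-\alpha$. The hypothesis $\alpha<\lam(\Wb)+1$ gives $\lam(\Wb)-\alpha>-1$, so every eigenvalue strictly exceeds $-1$. This is the only place the explicit interval for $\alpha$ is used, and it relies only on $0\preceq\tH\preceq\Ib_{nm}$.

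For the upper bound I would argue by contradiction to extract \emph{strictness}. Since $\alpha\tH\succeq0$, for any $v$ the Rayleigh quotient obeys $v^\top(\Wb\otimes\Ib_m-\alpha\tH)v\le v^\top(\Wb\otimes\Ib_m)v\le\|v\|^2$, so the largest eigenvalue is at most $1$. Suppose it equals $1$, attained at a unit vector $v$. Then both inequalities are tight, which forces simultaneously (a) $v$ to lie in the top eigenspace of $\Wb\otimes\Ib_m$, i.e. $v=\1_n\otimes u$ with $u\neq0$, and (b) $v^\top\tH v=0$. But
\[
v^\top\tH v=\sum_{i=1}^n\frac{(\Hb_i^\top u)^2}{\Hb_i^\top\Hb_i},
\]
so (b) forces $\Hb_i^\top u=0$ for every $i$, that is $\Hb u=0$ with $u\neq0$.

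Here is where the uniqueness hypothesis enters, and I expect this to be the crux of the argument: $\eq\in\Eq^\ast_{[nm]}$ having a unique solution is equivalent to $\Hb$ having full column rank, i.e. $\ker\Hb=\{0\}$. This contradicts $\Hb u=0$ for $u\neq0$, so the value $1$ cannot be attained and every eigenvalue is strictly below $1$. Combining with the lower bound yields $|\mu|<1$ for all eigenvalues $\mu$, completing the proof. The only genuinely delicate point is this strictness at the upper edge; the equivalence ``unique solution $\Leftrightarrow$ $\Hb$ injective'' is the hinge that prevents a consensus-type eigenvector $\1_n\otimes u$ from being annihilated by $\tH$.
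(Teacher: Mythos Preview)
Your proof is correct and follows essentially the same approach as the paper's: both arguments exploit the symmetry of $\Wb\otimes\Ib_m-\alpha\tH$ to bound eigenvalues via Rayleigh quotients, obtain the lower bound $\lam(\Wb)-\alpha>-1$ from $0\preceq\tH\preceq\Ib_{nm}$ and the hypothesis on $\alpha$, and establish strictness at the upper edge by identifying the equality case as $v=\1_n\otimes u$ with $\Hb u=0$, which the unique-solution assumption rules out. Your use of Weyl's inequality for the lower bound is just a repackaging of the paper's direct quadratic-form estimate.
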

\begin{proof}
	By the randomness of $\Wb$ and Lemma 8.1.21 \cite{horn1990matrix}, for any $\vb=[\vb_1^\top\ \dots\ \vb_n^\top]^\top\neq0$ with $\vb_i\in\R^{m}$, there holds $\lam(\Wb)\|\vb\|^2\le \vb^\top(\Wb\otimes\Ib_m)\vb\le\|\vb\|^2$, where the right--hand equality holds if and only if $\vb_1=\dots=\vb_n$. In addition, $0\le \vb^\top\tH\vb=\sum\limits_{i=1}^n\frac{\|\Hb_i^\top\vb_i\|^2}{\Hb_i^\top\Hb_i}\le \|\vb\|^2$ for all $\vb$, where the left--hand equality holds if and only if each $\vb_i$ is perpendicular to $\Hb_i$. Then it can be concluded that $\big(\lam(\Wb)-\alpha\big)\|\vb\|^2\le\vb^\top(\Wb\otimes\Ib_m-\alpha\tH)\vb\le\|\vb\|^2$ for all $\vb$, where the right--hand equality holds if and only if $\vb_1=\dots=\vb_n\in\ker(\Hb)$. Since $\alpha<\lam(\Wb)+1$  and $\ker(\Hb)=\emptyset$, $-\|\vb\|^2<\vb^\top(\Wb\otimes\Ib_m-\alpha\tH)\vb<\|\vb\|^2$, implying that all its eigenvalues fall in $(-1,1)$.
\end{proof}

Again we divide the proof into three steps.

\noindent Step 1. Let a $T$-periodic signal $\rb:\Z^{\ge0}\to\R^m$ be added to $\xb_i(t)$ at each time $t$ by the eavesdropper $i$ with $T\ge 2nm+1$ and
\begin{equation}\notag
\Rb=\begin{bmatrix}
\rb(0) & \rb(1) & \cdots & \rb(T-1)\\
\rb(T-1) & \rb(0) & \cdots & \rb(T-2)\\
\vdots & \vdots & \ddots & \vdots\\
\rb(1) & \rb(2) & \cdots & \rb(0)
\end{bmatrix}
\end{equation}
having full rank.
Now the system (\ref{eq:compact1_rewrite}) becomes
\begin{equation}\label{eq:compact1_frequency}
\gammab(t+1) = \Fb\gammab(t)+\Bb\rb(t),\ t>t^\ast,
\end{equation}
where $\Bb\in\R^{nm}$ is the Kronecker product of an all--zeros vector except for the $i$--the component being one and the $m$--by--$m$ identity matrix. From Lemma \ref{lem:stability}, $\Fb$ is a stable matrix if $\eq$ has a unique solution.

\noindent Step 2. We apply the system identification approach  from \cite{mckelvey1995subspace} to the system (\ref{eq:compact_output})--(\ref{eq:compact1_frequency}) with state $\gammab(t)$ and output $\yb(t)$ by the following procedure:
\begin{enumerate}[(i)]
	\item Collect $\big(\yb(t)\big)_{t=kT}^{kT+T-1}$ for some large $k$, namely consecutive outputs of the system (\ref{eq:compact_output})--(\ref{eq:compact1_frequency}) within one period $T$.
	\item Compute
	\begin{align*}\notag
	\sum\limits_{j=0}^\infty
	\begin{bmatrix}
	\Gb_{jT} & \dots & \Gb_{jT+T-1}
	\end{bmatrix}=
	\begin{bmatrix}
	\yb(kT) & \dots & \yb(kT+T-1)
	\end{bmatrix}
	\cdot\Rb^{-1},
	\end{align*}
	where $\Gb_j=(\Eb_i\otimes\Ib_m)(\Wb\otimes\Ib_m-\alpha\tH )^{j-1}\bb$ and $\Gb_0=0$ are the impulse responses of the system (\ref{eq:compact_output})--(\ref{eq:compact1_frequency}). The result is finite because the system (\ref{eq:compact1_frequency}) is stable by Lemma \ref{lem:stability}.
	
	\item Reorganize $\sum\limits_{j=0}^\infty\Gb_{jT+l}$ with $l=0,\dots,T-1$ in a Hankel matrix $\Mb\in\R^{p\times q}$
	\begin{equation}\notag
	\Mb =\begin{bmatrix}
	\sum\limits_{j=0}^\infty\Gb_{jT+1} & \cdots & \sum\limits_{j=0}^\infty\Gb_{jT+q}\\
	\vdots & \ddots & \vdots\\
	\sum\limits_{j=0}^\infty\Gb_{jT+p} & \cdots & \sum\limits_{j=0}^\infty\Gb_{jT+p+q-1}
	\end{bmatrix}
	\end{equation}
	with $p>nm,p+q=T$ and computes its singular value decomposition as
	$$\Mb=\begin{bmatrix}
	\Ub_s & \Ub_o
	\end{bmatrix}
	\begin{bmatrix}
	\Sigmab_s & 0\\
	0 & \Sigmab_o
	\end{bmatrix}
	\begin{bmatrix}
	\Vb_s^\top\\
	\Vb_o^\top
	\end{bmatrix},\ \Sigmab_s\in\R^{nm\times nm}.$$
	Then $\Ab_\ast=([\Ib_{\left|\mN_i\right|m(p-1)}\ 0]\Ub_s)^+[0\ \Ib_{\left|\mN_i\right|m(p-1)}]\Ub_s$ and $\Cb_\ast =[\Ib_{\left|\mN_i\right|m}\ 0]\Ub_s$, where $\Tb^{-1}(\Wb\otimes\Ib_m-\alpha\tH )\Tb = \Ab_\ast,\ (\Eb_i\otimes\Ib_m)\Tb=\Cb_\ast $ and $\Tb\in\R^{nm\times nm}$ is invertible.

\end{enumerate}
According to Theorem 1 in \cite{mckelvey1995subspace}, with $\Fb$ being stable, the obtained pair $(\Fb_\ast ,\Cb_\ast )$ and $(\Fb,\Cb)$ also satisfy the relation in (\ref{eq:realization_A_C}).

\noindent Step 3. From (\ref{eq:realization_A_C}), the eavesdropper can determine the set $\mathcal{T}$. By (\ref{eq:realization_A_C}), there exists some $\Tb\in\mathcal{T}$ such that $\diag\bigg(\frac{\Hb_1\Hb_1^\top}{\Hb_1^\top\Hb_1},\dots,\frac{\Hb_n\Hb_n^\top}{\Hb_n^\top\Hb_n}\bigg)=(\Wb\otimes\Ib-\Tb^{-1}\Fb_\ast \Tb)/\alpha$. Further, the eavesdropper can compute $\zb=\Hb\yb^\ast$.	This completes the proof.





\begin{thebibliography}{1}
\bibitem{cdc1}Y. Liu, J. Wu, I. Manchester, and G. Shi, ``Gossip algorithms that preserve privacy for distributed computation
{P}art {I}: the algorithms and convergence conditions,'' {\em IEEE Conference on Decision and Control},   2018.

\bibitem{cdc2}Y. Liu, J. Wu, I. Manchester, and G. Shi, ``Gossip algorithms that preserve privacy for distributed computation
{P}art {II}: performance against eavesdroppers,'' {\em IEEE Conference on Decision and Control},  2018.

\bibitem{magnusbook}
M. Mesbahi and M. Egerstedt. \emph{{G}raph {T}heoretic {M}ethods in {M}ultiagent {N}etworks}.
\newblock Princeton University Press,
2010.

\bibitem{nedic09}
A. Nedic and A. Ozdaglar, ``Distributed subgradient methods for multi-agent
optimization,'' \emph{IEEE Transactions on Automatic Control}, vol. 54,
no. 1, pp. 48--61, 2009.

\bibitem{nedic10}
A. Nedic, A. Ozdaglar, and P. A. Parrilo, ``Constrained consensus and
optimization in multi-agent networks,'' \emph{IEEE Transactions on Automatic
	Control}, vol. 55, no. 4, pp. 922--938, 2010.

\bibitem{boyd2011distributed}
S. Boyd, N. Parikh, E. Chu, B. Peleato, J. Eckstein, {\em et~al.},
``Distributed optimization and statistical learning via the alternating
direction method of multipliers,'' {\em Foundations and
	Trends in Machine Learning}, vol. 3, no. 1, pp. 1--122,
2011.


\bibitem{xiaoming-formation}
M. Egerstedt and X. Hu,
``Formation constrained multi-agent control,''
{\em IEEE Transactions on Robotics and Automation}, vol. 17, no. 6, pp. 947--951,
2001.





\bibitem{khan2009distributed}
U. A. Khan, S. Kar, and J. M. Moura, ``Distributed sensor localization in
random environments using minimal number of anchor nodes,'' {\em IEEE
	Transactions on Signal Processing}, vol. 57, no. 5, pp. 2000--2016, 2009.

\bibitem{tsi}
J. Tsitsiklis, D. Bertsekas, and M. Athans, ``Distributed asynchronous
deterministic and stochastic gradient optimization algorithms,'' \emph{IEEE
	Transactions on Automatic Control}, vol. 31, no. 9, pp. 803--812, 1986.

\bibitem{lynch} N. A. Lynch. {\em Distributed Algorithms}. Morgan Kaufmann Publishers Inc. San Francisco, 1996.	



\bibitem{jad03}
A. Jadbabaie, J. Lin, and A. S. Morse, ``Coordination of groups of mobile
autonomous agents using nearest neighbor rules,'' \emph{IEEE Transactions on
	Automatic Control}, vol. 48, no. 6, pp. 988--1001, 2003.

\bibitem{xiao04}
L. Xiao and S. Boyd, ``Fast linear iterations for distributed averaging,''
\emph{Systems \& Control Letters}, vol. 53, no. 1, pp. 65--78, 2004.






\bibitem{Mou-TAC-2015}
S. Mou, J. Liu, and A. S. Morse, ``A distributed algorithm for solving a linear
algebraic equation,'' \emph{IEEE Transactions on Automatic Control}, vol. 60,
no. 11, pp. 2863--2878, 2015.


\bibitem{Shi-TAC-2017}
G. Shi, B. D. Anderson, and U. Helmke, ``Network flows that solve linear
equations,'' \emph{IEEE Transactions on Automatic Control}, vol. 62, no. 6,
pp. 2659--2674, 2017.

%


\bibitem{liu2013asynchronous}
J.~Liu, S.~Mou, and A.~S. Morse, ``An asynchronous distributed algorithm for
  solving a linear algebraic equation,'' in {\em Decision and Control (CDC),
  2013 IEEE 52nd Annual Conference on}, pp.~5409--5414, IEEE, 2013.



\bibitem{anderson2015decentralized}
B.~Anderson, S.~Mou, A.~S. Morse, and U.~Helmke, ``Decentralized gradient
  algorithm for solution of a linear equation,'' {\em Numerical Algebra,
  Control \& Optimization}, vol.~6, no.~3, pp.~319--328, 2016.

\bibitem{sundaram2007} S. Sundaram and C. Hadjicostis, ``Finite-time distributed consensus in
graphs with time-invariant topologies," {\em American Control Conference},  pp. 711-716, 2007.

\bibitem{yuan2013} Y. Yuan, G.-B. Stan, L. Shi, M. Barahona, and J. Goncalves, ``Decentralised
minimum-time consensus," {\em Automatica}, vol. 49, no. 5, pp. 1227-1235,
2013.




\bibitem{huang2012differentially}
Z. Huang, S. Mitra, and G. Dullerud, ``Differentially private iterative
synchronous consensus,'' in {\em Proceedings of the 2012 ACM Workshop on
	Privacy in the Electronic Society}, pp. 81--90, ACM, 2012.

\bibitem{mo2017privacy}
Y. Mo and R. M. Murray, ``Privacy preserving average consensus,'' {\em IEEE
	Transactions on Automatic Control}, vol. 62, no. 2, pp. 753--765, 2017.

\bibitem{manitara2013privacy}
N. E. Manitara and C. N. Hadjicostis, ``Privacy-preserving asymptotic average
consensus,'' in {\em European Control Conference}, pp. 760--765,
IEEE, 2013.

\bibitem{nicolas2017}
F. Hanzely, J. Konecny, N. Loizou, P. Richtarik, and D. Grishchenko, ``Privacy preserving randomized gossip algorithms," in {\em arXiv}, 1706.07636, 2017.

\bibitem{claudio2018}
C. Altafini, ``A dynamical approach to privacy preserving average consensus," in {\em arXiv}, 1808.08085, 2018.


\bibitem{ruigang2017} R. Wang, I. R. Manchester,  and J. Bao, ``Distributed economic MPC with separable
control contraction metrics," {\em IEEE Control Systems Letters}, vol. 1, no. 1, pp. 104--109,  2017. 

\bibitem{glover1976} M. S. Grewal and K. Glover, ``Identifiability of linear and nonlinear
dynamical systems," {IEEE Transactions on Automatic Control}, pp. 833--837, 1976. 


\bibitem{ljung1994}L. Ljung and T. Glad, ``On global identifiability for arbitrary model
parametrizations," {\em Automatica}, vol. 30, no. 2, pp. 265--276, 1994.


\bibitem{mesbahi2012}
M. Nabi-Abdolyousefi, and M. Mesbahi, ``Network identification via node knock-outs,'' \emph{IEEE Transactions on Automatic Control}, vol. 57,
no. 12, pp. 3214--3219, 2012.

\bibitem{duffield2006} N. Duffield, ``Network tomography of binary network
performance characteristics," {\em IEEE Transactions on Information Theory}, vol. 52, no. 12, pp.5373--5388, 2006. 

\bibitem{materassi2012} D. Materassi and M. V. Salapaka, ``On the Problem of Reconstructing an Unknown
Topology via Locality Properties of the Wiener Filter," {\em IEEE Transactions on Automatic Control}, vol. 57, no. 7, pp.1765-- 1777, 2012.

\bibitem{vdh2018} H. H. M.  Weerts, P. M. J. Van den Hof, and A. G.  Dankers, ``Identifiability of linear dynamic networks,"
{\em Automatica}, vol. 89, pp. 247--258,  2018. 

\bibitem{budin1971minimal}
M.~Budin, ``Minimal realization of discrete linear systems from input-output
observations,'' {\em IEEE Transactions on Automatic Control}, vol.~16, no.~5,
pp.~395--401, 1971.

\bibitem{mckelvey1995subspace}
T.~McKelvey and H.~Ak{\c{c}}ay, ``Subspace based system identification with
periodic excitation signals,'' {\em Systems \& Control letters}, vol.~26,
no.~5, pp.~349--361, 1995.

\bibitem{dwork2006calibrating}
C. Dwork, F. McSherry, K. Nissim, and A. Smith,
``Calibrating noise to sensitivity in private data analysis,"
{\em Theory of Cryptography Conference}, pp. 265--284, Springer, 2006.

\bibitem{huang2015differentially}
Z. Huang, S. Mitra, and N. Vaidya,
``Differentially private distributed optimization,"
{\em Proceedings of the 2015 International Conference on Distributed Computing and Networking}, pp. 4, ACM, 2015.

\bibitem{pappas2017}S. Han, U. Topcu, and G. J. Pappas, ``Differentially private distributed
constrained optimization,''  {\em IEEE Transactions on Automatic Control}, vol.~62, no.~1,
pp.~50--64, 2017.

\bibitem{pappas2014}J. L. Ny, and G. J. Pappas, ``Differentially private filtering,''  {\em IEEE Transactions on Automatic Control}, vol.~59, no.~2,
pp.~341--354, 2014.


\bibitem{mcsherry2009privacy}
F.~D. McSherry, ``Privacy integrated queries: an extensible platform for
privacy-preserving data analysis,'' in {\em Proceedings of the 2009 ACM
	SIGMOD International Conference on Management of data}, pp.~19--30, ACM,
2009.

\bibitem{gallier2011basics}
J.~Gallier, ``Basics of affine geometry,'' in {\em Geometric Methods and
	Applications}, pp.~7--63, Springer, 2011.

\bibitem{de2000minimal}
B.~De~Schutter, ``Minimal state-space realization in linear system theory: an
overview,'' {\em Journal of Computational and Applied Mathematics}, vol.~121,
no.~1-2, pp.~331--354, 2000.

\bibitem{horn1990matrix}
R. A. Horn and C. R. Johnson. {\em Matrix {A}nalysis}.
\newblock Cambridge University Press, 1990.

\end{thebibliography}
%

\end{document}